\documentclass[aps,
               reprint,
               superscriptaddress]{revtex4-2}

\usepackage{amsmath,amssymb,mathtools}
\usepackage{amsthm}
\usepackage{booktabs}
\usepackage{graphicx}% Include figure files
\usepackage{dcolumn}% Align table columns on decimal point
\usepackage{bm}% bold math
\usepackage[dvipsnames]{xcolor}
\usepackage[hidelinks,
            colorlinks=true,
            allcolors=orange
            ]{hyperref}% add hypertext 
\hypersetup{pdfpagemode=UseNone}% hide sidebar of pdf viewer after compiling
\usepackage{siunitx}
\usepackage{physics2}
\usepackage[italicdiff]{physics}
\usephysicsmodule{braket}

\newcommand{\sSz}{{}^1S_0}

\newcommand{\sPo}{{}^1P_1}

\newcommand{\tPo}{{}^3P_1}

\newcommand{\fermibit}{{}^{171}\mathrm{Yb}}
\newcommand{\fermidit}{{}^{173}\mathrm{Yb}}

\newcommand{\bal}{\begin{equation}\begin{aligned}}
\newcommand{\eal}{\end{aligned}\end{equation}}

\theoremstyle{plain}
\newtheorem{thm}{Theorem}

\newtheorem{pro}[thm]{Proposition}

\allowdisplaybreaks[1]
\begin{document}
\preprint{}

\title{Spin-Cat Qubit with Biased Noise in an Optical Tweezer Array}

\author{Toshi Kusano}
\thanks{kusano@yagura.scphys.kyoto-u.ac.jp}
\affiliation{Department of Physics, Graduate School of Science, Kyoto University, Kyoto 606-8502, Japan}

\author{Kosuke Shibata}
\affiliation{Department of Physics, Graduate School of Science, Kyoto University, Kyoto 606-8502, Japan}

\author{Chih-Han~Yeh}
\affiliation{Department of Physics, Graduate School of Science, Kyoto University, Kyoto 606-8502, Japan}

\author{Keito Saito}
\affiliation{Department of Physics, Graduate School of Science, Kyoto University, Kyoto 606-8502, Japan}

\author{Yuma Nakamura}
\affiliation{Department of Physics, Graduate School of Science, Kyoto University, Kyoto 606-8502, Japan}
\affiliation{Yaqumo, Inc., 2-3-2 Marunouchi, Chiyoda-ku, Tokyo 100-0005, Japan}

\author{Rei Yokoyama}
\affiliation{Department of Physics, Graduate School of Science, Kyoto University, Kyoto 606-8502, Japan}

\author{Takumi Kashimoto}
\affiliation{Department of Physics, Graduate School of Science, Kyoto University, Kyoto 606-8502, Japan}

\author{Tetsushi Takano}
\affiliation{Department of Physics, Graduate School of Science, Kyoto University, Kyoto 606-8502, Japan}
\affiliation{The Hakubi Center for Advanced Research, Kyoto University, Kyoto 606-8502, Japan}

\author{Yosuke Takasu}
\affiliation{Department of Physics, Graduate School of Science, Kyoto University, Kyoto 606-8502, Japan}

\author{Ryuji Takagi}
\affiliation{Department of Basic Science, The University of Tokyo, 3-8-1 Komaba, Meguro-ku, Tokyo 153-8902, Japan}
  
\author{Yoshiro Takahashi}
\affiliation{Department of Physics, Graduate School of Science, Kyoto University, Kyoto 606-8502, Japan}
\date{\today}

\begin{abstract}
Bias-tailored quantum error correcting codes (QECCs) offer a higher error threshold than standard QECCs and have the potential to achieve lower logical errors with less space overhead. The spin-cat qubit, encoded in a large nuclear spin-$F$ system, is a promising candidate for bias-tailored QECCs. Yet its feasibility is hindered by the difficulty of performing fast covariant SU(2) rotation with arbitrary rotation angles for nuclear spins and by a lack of noise characterization for gate operations in neutral atom platforms. Here we demonstrate single-qubit controls of $\fermidit$ spin-cat qubits with nuclear spin $I=5/2$ in an optical tweezer array. We implement a covariant SU(2) rotation and non-linear rotations by optical beams and achieve an averaged single-Clifford gate fidelity of $0.961_{-5}^{+5}$. The measurement of the coherence time and spin relaxation time shows that the idling error becomes increasingly biased toward dephasing errors as the magnitude of the encoded sublevel $|m_F|$ increases. Furthermore, we benchmark the noise bias of rank-preserving gates on spin-cat qubits, demonstrating a finite bias of $18_{-11}^{+132}$, in contrast to the case of the two-level system in $\fermibit$, which shows no bias within the experimental uncertainty. Our work demonstrates the feasibility of spin-cat qubits for realizing bias-tailored QECCs, paving the way for achieving hardware-efficient quantum error correction.
\end{abstract}

\maketitle

%%%%%%%%%%%%%%% introduction %%%%%%%%%%%%%%%%%
\section{Introduction}
Toward large-scale quantum computation, experimental efforts have focused on scaling the number of physical qubits~\cite{scholl2021quantum,ebadi2021quantum,Schymik2022situ,Lars2024super,Tao2024lattice,Gyger2024Cont,Norcia2024Ite,Manetsch20246100,Pichard2024Cryo,Li2025Fast,Chiu2025Continuous,zhu2025Yb2400}. In parallel, resource-efficient quantum error correcting codes (QECCs) have been theoretically developed. Particularly, bias-tailored QECCs~\cite{Alieris2008rep,Tuckett2018sur,Higgott2023XY,Bonilla2021XZZX,Darmawan2021XZZX,Huang20233D,Ruiz2025LDPC} can achieve high error thresholds by a simple modification to standard QECCs with a biased noise model towards dephasing errors. This approach holds the potential to attain logical error rates comparable to those of standard QECCs with fewer qubits, reducing the space overhead of fault-tolerant quantum computation (FTQC). Several candidates for qubits exhibiting a biased noise structure, such as erasure qubits~\cite{Cong2022Leak,Wu2022Erasure,Sahay2023Erasure,Kubica2023Erasure} and bosonic cat qubits~\cite{Cochrane1999Cat,Puri2020CX,Darmawan2021XZZX}, have been proposed and are developed on platforms such as superconducting devices~\cite{Chou2024Dual,Grimm2020Kerr,Reglade2024T1,Qing2024FTQC}, trapped ion systems~\cite{Quinn2024High}, and neutral atom systems~\cite{Scholl2023Erasure,Ma2023high,Chow2024Loss,Bluvstein2025Arch,Zhang2025Lev}.
% 103 words

Another promising candidate for a biased qubit is the spin-cat qubit encoded in large spin-$F$ systems~\cite{Omanakuttan2024spin,Kruckenhauser2025Dark}, which is defined as a superposition of the Zeeman sublevels $m_F = \pm F$ (Fig.~\ref{fig:overview}). Recent studies have shown that this encoding scheme exhibits several advantageous properties for FTQC: (1) hopping errors are correctable unless multiple hoppings change the sign of $m_F$~\cite{Omanakuttan2024spin}; (2) a measurement-free correction for these hopping errors is available~\cite{Omanakuttan2024spin,Kruckenhauser2025Dark,debry2025}; and, (3) bit-flip errors are suppressed and the noise structure is biased towards dephasing~\cite{Omanakuttan2024spin}. A significant advantage of the spin-cat approach is its inherent robustness against idling errors when utilizing a nuclear spin-$F$ system, allowing us to simultaneously achieve a long coherence time~\cite{Yang2025Minute,Yu2025SiliconCat} and strong bit-flip error mitigation.

Despite these favorable characteristics, further investigation is required to determine the experimental feasibility of the spin-cat state for FTQC.
One primary challenge is the realization of fast covariant SU(2) rotations~\cite{Yu2025SiliconCat}, which preserves the shape of the Wigner function to ensure fault-tolerance against hopping errors, in nuclear spin-$F$ systems. While physical systems sensitive to magnetic fields can achieve covariant SU(2) rotations using radio-frequency or magnetic fields~\cite{chalopin2018,debry2025} with Rabi frequency smaller than Zeeman splitting, fast covariant SU(2) rotation is challenging in nuclear spin systems with low magnetic sensitivity. Although optical lasers are expected to enable fast rotation operations, their use can distort the shape of the Wigner function due to the tensor lightshift~\cite{chalopin2018, Yang2025Minute}, making the implementation of covariant SU(2) rotations for arbitrary rotation angles non-trivial.

Furthermore, while the biased noise characteristics of spin-cat states against idling errors have been studied in superconducting~\cite{Roy2025, Champion2025} and silicon~\cite{Yu2025SiliconCat} platforms, their biased noise characteristics against single-qubit gate operations have not been shown. Since gate operation errors are the primary error source in neutral atom systems, gate operation errors, rather than idling errors, characterize the biased properties of the spin-cat qubit. Therefore, it is currently unclear whether the spin-cat qubit is feasible for bias-tailored QECCs in this platform.

\begin{figure*}[ht]
    \centering
    \includegraphics[width=\textwidth]{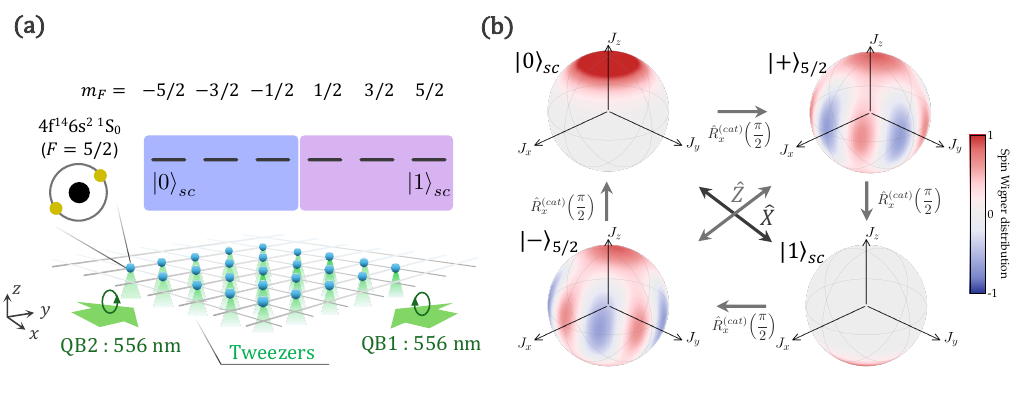}
    \caption{\textbf{Overview of spin-cat state controls in an optical tweezer array.} \textbf{(a)} Schematic illustration of the control beam geometries. The qubit is encoded in the nuclear-spin stretched states of the $\fermidit$ atom ground state as $\ket{0}_{sc}=\ket{-5/2}$ and $\ket{1}_{sc}=\ket{+5/2}$. This encoding scheme suppresses hopping errors by leveraging the redundant states between the encoded qubit states. The spin-cat states $\ket{\pm}_{5/2}$ are generated as superpositions of the stretched states. We control single atoms trapped in an optical tweezer array by using a single-beam Raman technique. QB1 and QB2 are used for spin-cat state preparation and covariant SU(2) rotation, respectively, with different laser detunings. 
    \textbf{(b)} Wigner function representation of the coherent manipulations for the spin-5/2 system. The spin-cat state rotations, denoted as $\hat{R}_x^{(cat)}(\pi/2)$, cyclically map the basis states: $\ket{0}_{sc}\rightarrow \ket{+}_{5/2}\rightarrow \ket{1}_{sc} \rightarrow \ket{-}_{5/2} \rightarrow \ket{0}_{sc}$. The central arrows indicate the application of the $\hat{X}$ gate and the $\hat{Z}$ gate.}
    \label{fig:overview}
\end{figure*}

Additionally, optical tweezer array platforms, offering scalability~\cite{scholl2021quantum,ebadi2021quantum,Schymik2022situ, Lars2024super, Tao2024lattice, Gyger2024Cont, Norcia2024Ite, Manetsch20246100, Pichard2024Cryo, Li2025Fast, Chiu2025Continuous,zhu2025Yb2400}, high-fidelity gates~\cite{Evered2023, Peper2024Spec, Tsai2024Bench, Infleqtion2024Univ, AC2025High, Senoo2025High}, and all-to-all connectivity~\cite{Bluvstein2025Arch}, have primarily been focusing on two-level systems. Utilizing a multi-level system for spin-cat qubit on this platform, together with efficient FTQC schemes with reduced overhead~\cite{zhou2025low, Zhou2025Resource, sunami2025, sahay2025fold,claes2025}, high-threshold QECCs by biased noise~\cite{Bonilla2021XZZX,Darmawan2021XZZX,Sahay2023Erasure} are expected to lower the requirements for utility-scale quantum computation.

Motivated by these considerations, in this paper we demonstrate single-qubit controls of spin-cat qubits encoded in the $\sSz$ ground state of $\fermidit$ atoms with nuclear spin $I=5/2$ trapped in an optical tweezer array. We characterize the gate fidelity and noise bias structure utilizing the Clifford randomized benchmarking (CRB)~\cite{Knill2008,Magesan2011,Magesan2012} and $\mathbb{D}_8$ dihedral randomized benchmarking (DRB)~\cite{Claes2023,Qing2024FTQC}, respectively. The single-qubit gates for the spin-cat qubit is realized by combining three components: optical laser-driven covariant $\text{SU}(2)$ rotation (used as the Pauli gate), non-linear rotation (used as the Hadamard gate), and arbitrary $Z$-axis rotation (including the $\hat{T}$ gate). By employing the covariant SU(2) and non-linear rotations, we achieve high-fidelity single-qubit gates, demonstrated by an averaged Clifford gate fidelity of $0.961_{-5}^{+5}$ using a coarse-grained (CG) measurement~\cite{Kofler2007,Duarte2017} for $m_F < 0$ states. The results indicate that the Clifford gate fidelity improves with increasing CG level, validating the redundancy in the qudit system. Furthermore, the measurement of the dependence of coherence times $T_2^*$ and spin-relaxation times $T_1$ on the encoded Zeeman sublevels reveals that the idling error in the large spin-$F$ system is biased toward the $Z$-error. Finally, we directly characterize the noise bias using the DRB method. Results show a finite bias of $\eta = 18_{-11}^{+132}$ with a non-dephasing error probability of $3.7_{-3.2}^{+3.3}\times10^{-4}$ and a dephasing error probability of $6.7_{-1.5}^{+1.7}\times10^{-3}$, in good contrast to the case of the two-level system of $\fermibit$ which shows no bias within the experimental uncertainty. These measurements establish the feasibility of the spin-cat qubit for realizing bias-tailored QECCs, facilitating the realization of hardware-efficient quantum error correction.

The paper is organized as follows. Section~\ref{sec:SingleBeamRaman} introduces the theoretical framework of single-beam Raman transitions in large-spin systems and presents the necessary and sufficient conditions for implementing high-fidelity single-qubit gates. 
In Sec.~\ref{sec:SingleQubitControl}, we experimentally demonstrate the coherent manipulation of the spin-cat qubit using the single-beam Raman technique and benchmark the average Clifford gate fidelity. 
Section~\ref{sec:Lifetime} characterizes the idling error bias by measuring the coherence and spin-relaxation times for various encoded sublevels. 
In Sec.~\ref{sec:NoiseBias}, we evaluate the noise bias of single-qubit gates using the noise-bias dihedral randomized benchmarking method. 
Section~\ref{sec:ErrorBudget} provides a detailed error budget analysis for the single-qubit gate operations. Finally, Sec.~\ref{sec:Discussion} concludes the paper with a summary and an outlook.

%%%%%%%%%%%%%%%%%%%%%%%%%%%%%%%%%%%%%%%%%%%%%%%%%%%%%%%%%%%%%%
%%%%%%%%%%%%%%%%%%%%%%%%%%%%%%%%%%%%%%%%%%%%%%%%%%%%%%%%%%%%%%
% Single beam raman
%%%%%%%%%%%%%%%%%%%%%%%%%%%%%%%%%%%%%%%%%%%%%%%%%%%%%%%%%%%%%%
%%%%%%%%%%%%%%%%%%%%%%%%%%%%%%%%%%%%%%%%%%%%%%%%%%%%%%%%%%%%%%
\section{Single-beam Raman Transition for a large spin System}
\label{sec:SingleBeamRaman}
In this work, we employ a single-beam Raman technique for multi-spin control, an extension of a technique implemented previously in a two-level system~\cite{Jenkins2022}. The method is reformulated for a six-level system. We assume a control laser that is detuned from the $\sSz \text{-} \tPo$ resonance is applied to the atoms. In this case, the lightshift experienced by $^{173}\text{Yb}$ atoms in the $\sSz$ nuclear-spin manifold can be written as:
\begin{align}
    \label{eq:Lightshift}
    \hat{H}_{LS}/\hbar
    = \sum_{k=0}^{2F}\delta_k\ket{F,m_F=-F+k}\bra{F,m_F=-F+k},
\end{align}
where $\hbar$ is the reduced Planck constant, and $F=5/2$. The magnitude of these lightshifts $\delta_k~(k=0,1,\dots,2F)$ is controllable by the intensity, detuning, and the polarization of the laser. For the details, see Appendix~\ref{method:Lightshift}.

The unitary time evolution for the single-beam Raman transition is written as:
\begin{equation}
    \label{eq:rot_unitary}
    \hat{U}_{rot}^{(F)}(t) = \hat{d}^{(F)}\qty(\beta) \exp\qty(-it \frac{\hat{H}_{LS}}{\hbar}) \hat{d}^{(F)\dagger}\qty(\beta),
\end{equation}
where the $\hat{d}^{(F)}\qty(\beta)$ is the rotational operator. This operator is called the Wigner small $d$-matrix~\cite{JJSakurai2020}:
\begin{equation}
    \label{eq:wigner-small-d}
    d_{m'm}^{(F)}(\beta) = \bra{F,m'}\exp\qty(-i\frac{\beta}{\hbar}\hat{J}_y)\ket{F,m},
\end{equation}
where the rotational angle $\beta$ is the angle between the quantization axis defined by the magnetic field and the control laser, and is set to $90\,^\circ$ ($\beta = \pi/2$) from experimental conditions.
Note that while the original lightshift Hamiltonian Eq.~(\ref{eq:Lightshift}) has only diagonal components, the single-beam Raman operator Eq.~(\ref{eq:rot_unitary}) acquires non-zero off-diagonal components due to the rotational transformation, thereby allowing for coherent transitions between different Zeeman sublevels. Qualitatively, this rotational transformation can be understood as a switching of the quantization axis from the one determined by the magnetic field to the one determined by the lightshift originating from the control laser.
Therefore, the described formulation for the dynamics of this single-beam Raman transition is well justified when the lightshift is sufficiently larger than the Zeeman splitting caused by the magnetic field.
% 192 words

Unlike the two-level system, performing a single-beam Raman transition in a multi-spin system necessitates a \textit{differential lightshift engineering} to control the multi-components differential lightshifts (DLSs) between the Zeeman sublevels, where DLSs are defined by $\Delta_{k+1} = \delta_{k+1}-\delta_k ~(k=0,1,\dots,2F-1)$. To construct single-qubit Clifford gates, we develop $\hat{R}_x(\pi)$ and $R_x^{(cat)}(\pi/2)$ gates given as following,
\begin{align}
    \label{eq:Xgate}
    \hat{R}_x(\pi) &= \sum_{k=0}^{2F}\ket{F,-F+k}\bra{F,F-k}, \\
    \begin{split}
        \label{eq:Hgate}
        \hat{R}_x^{(cat)}(\pi/2) &= \frac{1}{\sqrt{2}}\sum_{k=0}^{2F}\ket{F,-F+k}\bra{F,-F+k} \\
        & \pm\frac{i}{\sqrt{2}}\sum_{k=0}^{2F}\ket{F,-F+k}\bra{F,F-k}.
    \end{split}
\end{align}
The unitary time evolution operator $\hat{U}_{rot}^{(F)}(t)$ for a single-beam Raman transition Eq.~(\ref{eq:rot_unitary}) generally has non-zero matrix elements between any spin states. To make matrix elements zero except for specific components (Eqs.~(\ref{eq:Xgate}) and (\ref{eq:Hgate})) at a certain gate operation time $t$, appropriate DLSs must be chosen. For $F=5/2$, we present the \textit{necessary} and \textit{sufficient} conditions for these DLSs as follows:
\begin{align}
    \label{eq:Xgate_condition}
    &\Delta_1:\Delta_2:\Delta_3:\Delta_4:\Delta_5 \notag \\
    &= (2n_1+1):(2n_2+1):(2n_3+1):(2n_4+1):(2n_5+1),
\end{align}
for the $\hat{R}_x(\pi)$ gate, and
\begin{align}
    \label{eq:Hgate_condition}
    &\Delta_1:\Delta_2:\Delta_3:\Delta_4:\Delta_5 \notag \\
    &= (4n_1\pm1):(4n_2\mp1):(4n_3\pm1):(4n_4\mp1):(4n_5\pm1),
\end{align}
for the $\hat{R}_x^{(cat)}(\pi/2)$ gate. Here $n_k ~(k=1,2,\dots,2F)$ are integers. For the proof, see Appendix~\ref{subsec:OptimalProof}.
%161 words

In addition to analytically deriving the optimal laser detunings, we numerically investigate the optimal laser detunings for the above two types of gates under a more realistic setting that includes the effects of photon scattering, by computing the gate infidelity using a quantum master equation (see Appendix~\ref{subsec:detuningsimulation}). 
We find that there are several detunings where the gate infidelity is significantly suppressed, and confirm that these all satisfy the conditions stated in Eqs.~(\ref{eq:Xgate_condition}) and (\ref{eq:Hgate_condition}).

\begin{figure*}[t]
    \centering
    \includegraphics[width=\textwidth]{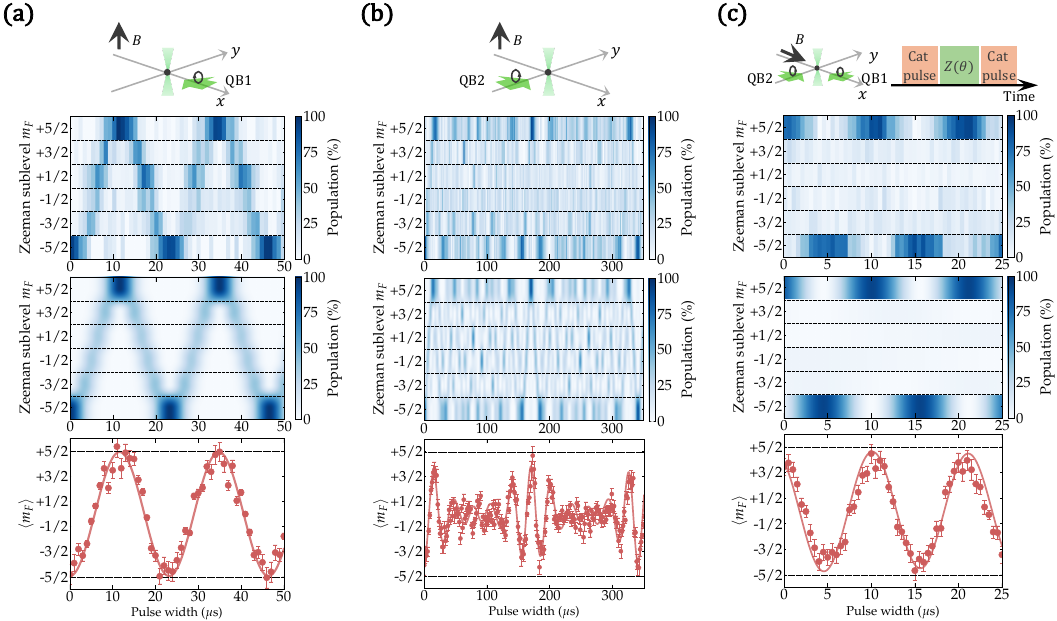}
    \caption{\textbf{Spin-cat qubit manipulations.} Time evolution of the $\ket{m_F}$ state population as a function of pulse duration for \textbf{(a)} a covariant SU(2) rotation, \textbf{(b)} a non-linear rotation, and \textbf{(c)} a $z$-axis rotation. All control lasers are applied in the horizontal plane with circular polarization. QB1 is utilized for the covariant SU(2) and $z$-axis rotations, and QB2 is used for the non-linear rotations. A bias magnetic field is applied orthogonally to the propagation axis of both QB1 and QB2 for the covariant SU(2) and non-linear rotations. In contrast, the magnetic field is aligned parallel to QB1 for the $z$-axis rotation. The simulated dynamics computed from the master equation (middle panels) show good agreement with the experimental data (top panels) for all rotations. The expectation value of the dynamics of the magnetization $\langle m_F \rangle$ (bottom panels) shows (a) a sinusoidal curve with a Rabi frequency of $2\pi\times 43.0\,\text{kHz}$, (b) a beat signal with five distinct frequencies, and (c) a sinusoidal curve with a Ramsey frequency of $2\pi\times 90.5\,\text{kHz}$. For the non-linear rotation, the spin-cat state is generated with a pulse duration of $85.1\,\si{\micro\second}$. In the bottom panels, the solid red lines represent the simulated curves, and error bars represent $1\sigma$ confidence intervals.}
    \label{fig:dynamics}
\end{figure*}

%%%%%%%%%%%%%%%%%%%%%%%%%%%%%%%%%%%%%%%%%%%%%%%%%%%%%%%%%%%%%%
% Single beam raman
%%%%%%%%%%%%%%%%%%%%%%%%%%%%%%%%%%%%%%%%%%%%%%%%%%%%%%%%%%%%%%
\section{Coherent Manipulation \\ of Spin-Cat Qubit}
\label{sec:SingleQubitControl}
The generalized conditions given by Eqs.~(\ref{eq:Xgate_condition}) and (\ref{eq:Hgate_condition}) provide a powerful guideline when we use a multi-spin system for a spin-cat qubit. Here, we experimentally demonstrate the multi-spin dynamics via a single-beam Raman transition and benchmark the average single-qubit Clifford gate fidelity of the spin-cat qubit.
%40words

\subsection{Covariant SU(2) Rotation}
Covariant SU(2) rotations play an essential role in preserving the weight of hopping errors for the spin-cat qubits utilized for FTQC. It fulfills the \textit{rank-preserving} condition against hopping errors, since the occurrence of one hopping error does not induce any additional ones~\cite{Omanakuttan2024spin}.

The control laser for the $\hat{R}_x(\pi)$ gate is circularly polarized and irradiated from a direction perpendicular to the magnetic field (shown as QB1 in Fig.~\ref{fig:dynamics}). This laser is detuned from the $\sSz (F=5/2) - \tPo (F'=7/2)$ resonance by $+11.217\,\text{GHz}$. Under these conditions, the lightshift, which is proportional to $m_F$, acts as a fictitious magnetic field. It thus equalizes the magnitudes of the DLSs between the Zeeman sublevels.

The fact that the lightshift acts as a fictitious magnetic field ensures that this coherent operation is a covariant SU(2) rotation~\cite{Yu2025SiliconCat} (or spin-$F$ SU(2) rotation~\cite{Omanakuttan2024spin}), where the spin state is rotated around a quantization axis while the shape of the Wigner function is preserved. Under this covariant SU(2) rotation, each Zeeman sublevel is most strongly coupled to its nearest neighboring sublevel. We confirmed this behavior experimentally by measuring the population in each sublevel as a function of the pulse duration after initializing the atom in the $\ket{m_F=-5/2}$ state, denoted as $\ket{0}_{sc}$. 
The top panel in Fig.~\ref{fig:dynamics}(a) shows that the population initialized in the $\ket{0}_{sc}$ state is transferred to different Zeeman sublevels as a function of the pulse duration. Moreover, the spin population in the $\ket{0}_{sc}$ is transported to the opposite Zeeman sublevel of $\ket{m_F=+5/2}$, defined as $\ket{1}_{sc}$, at a pulse duration of $11.6\,\si{\micro\second}$, realizing the coherent operation corresponding to the $\hat{R}_x(\pi)$ gate. The experimental results agree well with the simulation (middle panel) using the master equation (see Appendix~\ref{method:Master} for details).

To further characterize the dynamics of the covariant SU(2) rotation, we reconstruct the expectation value of the magnetization $\langle m_F \rangle$ from the spin dynamics data of each Zeeman sublevel (bottom panel in Fig.~\ref{fig:dynamics}(a)). 
The dynamics of $\langle m_F \rangle$ from experiments (dots) show good agreement with the simulation results (solid line).
% The dynamics of the expectation value of the magnetization also show good agreement between the experimental values (dots) and the simulation results (solid line). 
Notably, although the $\fermidit$ atom in the $\sSz$ ground state has six spin components, the magnetization dynamics exhibits a single-frequency sinusoidal curve which is similar to that observed in two-level systems~\cite{Jenkins2022}. This indicates that the realized single-beam Raman transition is a covariant SU(2) rotation. The successful realization of the covariant SU(2) rotations in our experiment demonstrates the feasibility of the spin-cat qubit for FTQC.
% 519 words

%%%%%%%%%%%%%%%%%%%%%%%%%%%%%%%%%%%%%%%%%%%%%%
% Hadamard gate
%%%%%%%%%%%%%%%%%%%%%%%%%%%%%%%%%%%%%%%%%%%%%%
\subsection{Spin-Cat State Generation}
The negative value in the Wigner function of the spin-cat state (Fig.~\ref{fig:overview}(b)) suggests that the cat generation gate $\hat{R}_x^{(cat)}(\pi/2)$ cannot be a covariant SU(2) rotation. To perform this gate, we utilize a different laser beam (shown as QB2 in Fig.~\ref{fig:dynamics}) with circular polarization, and is irradiated from a direction orthogonal to the magnetic field with a detuning of $-5.005\,\text{GHz}$ from the $\sSz (F=5/2) - \tPo (F'=7/2)$ transition frequency. Under this detuning condition, the DLS ratios become $\Delta_1 : \Delta_2 : \Delta_3 : \Delta_4 : \Delta_5 = 7:9:11:13:15$, which satisfies the generalized condition in Eq.~(\ref{eq:Hgate_condition}).

The dynamics of the spin states induced by the spin-cat generation pulse exhibit a non-linear behavior (shown in Fig.~\ref{fig:dynamics}(b)), in the sense that a spin state is coupled not only to the nearest neighboring Zeeman sublevels but also to others. This behavior is in contrast to the covariant SU(2) rotation. From the measurements, atoms initially prepared in the $\ket{0}_{sc}$ state are converted to the spin-cat state,
\begin{equation}
    \ket{+}_{5/2} = \qty(\ket{0}_{sc}+i\ket{1}_{sc})/\sqrt{2},
\end{equation}
at a pulse duration of $85.1\,\si{\micro\second}$, where the pulse corresponds to the $\hat{R}_x^{(cat)}(\pi/2)$ gate. The experimental values of this non-linear rotation and the simulation results computed from the master equation (top and middle panels in Fig.~\ref{fig:dynamics}(b), respectively) also show good agreement.

The complex magnetization dynamics shown in the bottom panel of Fig.~\ref{fig:dynamics}(b) can be decomposed into sinusoidal curves characterized by five different frequencies. The resulting frequencies are in close agreement with the magnitude of the five DLSs $\Delta_k / (2\pi)$ ($k=1, 2, \dots, 2F$). Moreover, we find that the non-linear rotation speed is characterized by the fundamental frequency $f_k = |\Delta_k / (2\pi \times(4n_k \pm 1))|$ ($k=1, 2, \dots, 2F$), where $f_k$ should ideally have the same value for all $k$. Indeed, the average value of the five $f_k$ calculated from our model is $2.939\,\mathrm{kHz}$, which closely matches the inverse of the experimentally observed $2\pi$ pulse time for the non-linear rotation, $2.934(3)\,\mathrm{kHz}$.
Furthermore, the discussion above suggests that a smaller DLS ratio would be favorable for fast non-linear rotation, as the speed is inversely proportional to the magnitude of the ratio. We believe that our analysis here would help to select the optimal control laser detuning when performing non-linear rotation in other atomic species with larger spin-$F$, such as $^{87}\text{Sr}$~\cite{barnes2022assembly, Ahmed2025}.
% 369 words

%%%%%%%%%%%%%%%%%%%%%%%%%%%%%%%%%%%%%%%%%%%%%%
% Z(theta) gate
%%%%%%%%%%%%%%%%%%%%%%%%%%%%%%%%%%%%%%%%%%%%%%
\begin{figure}[b]
    \centering
    \includegraphics[width=\linewidth]{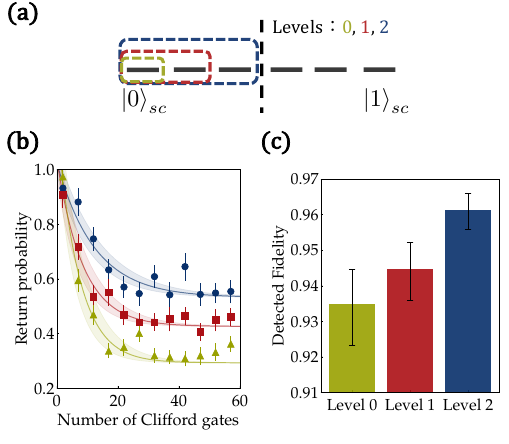}
    \caption{\textbf{Clifford randomized benchmarking (CRB) for the spin-cat qubit.} 
    \textbf{(a)} Schematic illustration of coarse-grained (CG) measurement in the spin-cat encoding. For simplicity, we denote $\ket{^{1}S_{0}, m_F=k}$ as $\ket{k}$, where $k\in \{-5/2, -3/2, -1/2, +1/2, +3/2, +5/2\}$. CG levels are defined by the measured states: (Level~0) only the $\ket{0}_{sc}$ state; (Level~1) the $\ket{0}_{sc}$ and $\ket{-3/2}$ states; (Level~2) the $\ket{0}_{sc}$, $\ket{-3/2}$ and $\ket{-1/2}$ states. 
    \textbf{(b)} Decay of the return probability after CRB circuits, measured using the CG measurements of level~0 (yellow triangle), level~1 (red square), and level~2 (blue circle). Solid curves are fits to the function $ap^m + b_i$, where $m$ is the circuit depth. For each level $i$, we use fixed noise floors $b_i$, which are determined by supplementary experiments on state-selective readout (see Appendix~\ref{method:SSR}). 
    \textbf{(c)} Averaged Clifford gate fidelities extracted from the CRB measurements. The fidelity improves as the CG level increases due to the redundancy in the qudit system.
    We obtain fidelities of $0.935_{-12}^{+10}$, $0.945_{-9}^{+8}$, and $0.961_{-5}^{+5}$ for level~0 (yellow), level~1 (red), and level~2 (blue), respectively. Shaded regions in (b) represent $1\sigma$-confidence intervals of the fit, and error bars in (b) and (c) represent $1\sigma$ confidence intervals.}
    \label{fig:CRBResult}
\end{figure}
\subsection{Phase Control of Spin-Cat State}
A rotation along the $z$-axis is one of the essential gates for universal single-qubit gate operations. In this experiment, we align the direction of the magnetic field to the propagation axis of the control laser beam. This avoids the mixing between different spin states, allowing us to control only the phase of each spin state.

As shown in Fig.~\ref{fig:dynamics}(c), we implement a Ramsey-type experiment and sandwich the $z$-axis rotation pulse with two $\hat{R}_x^{(cat)}(\pi/2)$ pulses to observe the phase dynamics of the spin components. When the initial state is prepared in the $\ket{+}_{5/2}$ state, only the $m_F = \pm 5/2$ states oscillate. The speed of phase oscillation is characterized by the energy difference between the $\ket{0}_{sc}$ and $\ket{1}_{sc}$ state, resulting in a rotation of the spin-cat state phase at an angular frequency of $2\pi \times 90.5\,\text{kHz}$. The corresponding $\hat{R}_z(\pi)$ pulse time is $5.5\,\si{\micro\second}$.
%133 word

%%%%%%%%%%%%%%%%%%%%%%%%%%%%%%%%%%%%%%%%%%%%%%
% Clifford Randomized Benchmarking
%%%%%%%%%%%%%%%%%%%%%%%%%%%%%%%%%%%%%%%%%%%%%%
\subsection{Benchmarking Clifford Gates}
We characterize the fidelity of the single Clifford gates using the basic coherent rotations demonstrated above. The Clifford gates are generated by composing the $\hat{R}_x^{(cat)}(\pi/2)$ and $\hat{R}_z(\pi/2)$ pulses, and the bias magnetic field and laser beam geometry are the same as those depicted in Fig.~\ref{fig:dynamics}(c). To evaluate the averaged Clifford gate fidelity, we employ CRB sequence~\cite{Knill2008,Magesan2011,Magesan2012}. This benchmarking technique measures the dependence of the return probability on the initial state as a function of the circuit depth $m$, where $m$ gates within the Clifford group are randomly sampled.

Unlike a two-level system, the spin-cat encoding involves additional energy levels between the qubit states. These intermediate energy levels provide the spin-cat qubit with redundancy against hopping errors, which are correctable unless they change the sign of $m_F$~\cite{Omanakuttan2024spin,Kruckenhauser2025,debry2025}. To verify this redundancy, we perform CG measurement~\cite{Kofler2007,Duarte2017} at the end of the CRB circuit. For simplicity, we denote from now on $\ket{^{1}S_{0}, m_F=k}$ as $\ket{k}$, where $k\in \{-5/2, -3/2, -1/2, +1/2, +3/2, +5/2\}$. We define CG measurement levels as follows (Fig.~\ref{fig:CRBResult}(a)):
\begin{itemize}
    \item Level~0: Only the $\ket{0}_{sc}$ state is readout.
    \item Level~1: Both the $\ket{0}_{sc}$ and the $\ket{-3/2}$ states are readout.
    \item Level~2: All spin states within the $m_F<0$ manifold are readout.
\end{itemize}
As the CG level increases, the final measurement of the CRB circuit accepts errors that wrongly distribute the spin population to the $\ket{-3/2}$ or $\ket{-1/2}$ state, and thus the detected gate fidelity improves. Our observations demonstrate that the Clifford gate fidelity improves as the CG level increases (Figs.~\ref{fig:CRBResult}(b) and (c)). We obtain the averaged Clifford gate fidelities of $0.935_{-12}^{+10}$, $0.945_{-9}^{+8}$, and $0.961_{-5}^{+5}$ for Level~0 (yellow triangle), Level~1 (red square), and Level~2 (blue circle), respectively.
% 255 words, total 1316 words in this section

%%%%%%%%%%%%%%%%%%%%%%%%%%%%%%%%%%%%%%%%%%%%%%
% Lifetime measurement
%%%%%%%%%%%%%%%%%%%%%%%%%%%%%%%%%%%%%%%%%%%%%%
\section{Lifetime Characterization}
\label{sec:Lifetime}
A unique feature of the spin-cat qubit is its biased noise structure, which originates from the redundant sublevels between the qubit states. In this structure, the bit-flip error is suppressed while the phase-flip error increases as the magnitude of the encoded spin state $\abs{m_F}$ increases, leading to a noise that is biased toward $Z$-error~\cite{Omanakuttan2024spin,Kruckenhauser2025}. To quantitatively characterize the noise structure for idling errors, we measure the coherence time $T_2^*$ and the spin relaxation time $T_1$.

\begin{figure}[t]
    \centering
    \includegraphics[width=\linewidth]{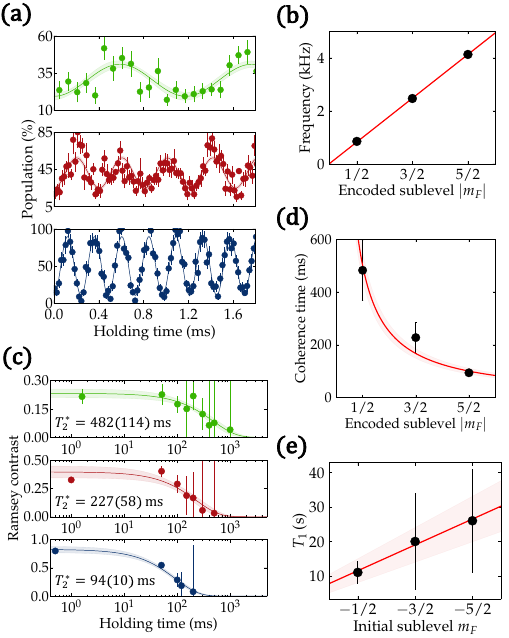}
    \caption{\textbf{Coherence time $T_2^*$ and spin relaxation time $T_1$ measurements.} 
    \textbf{(a)} Ramsey oscillations of the spin-cat and kitten states at a short holding time. The spin-cat state $\ket{+}_{5/2}$ (blue, bottom) and kitten states $\ket{+}_{3/2}$ (red, middle), and $\ket{+}_{1/2}$ (green, top) are prepared, and their respective phase accumulations are measured. 
    \textbf{(b)} Scaling of Ramsey frequency with the encoded sublevel $\abs{m_F}$. The Ramsey frequency extracted from (a) is proportional to the magnitude of the encoded sublevel $\abs{m_F}$, showing a slope of $1.651(6)~\text{kHz}$. 
    \textbf{(c)} Decay of the Ramsey contrasts as a function of holding time $t$, where the $\ket{+}_{5/2}$ (blue, bottom), $\ket{+}_{3/2}$ (red, middle), and $\ket{+}_{1/2}$ (green, top) are initially prepared. The coherence time $T_2^*$ is extracted from a fitting function $\propto \exp\qty(-t/T_2^*)$, yielding $T_2^* = 94(10)\,\text{ms}$ for the spin-cat qubit.
    \textbf{(d)} Scaling of the coherence time $T_2^*$. The coherence time scales with $1/|m_F|$, with a coefficient of $251(21)~\mathrm{ms}$. 
    \textbf{(e)} Scaling of the $T_1$ time. The $\ket{-1/2}$, $\ket{-3/2}$, and $\ket{-5/2}$ states are prepared, and the population of the $m_F>0$ states is measured after a varying holding time at a near-zero magnetic field. We observe that spin relaxation is suppressed for larger $\abs{m_F}$ encoding. Error bars in (a) represent $1\sigma$ confidence intervals, and shaded regions and error bars in (b-e) represent $1\sigma$-confidence intervals of the fit.
    }
    \label{fig:lifetime}
\end{figure}

\subsection{Coherence time measurement}
We evaluate the coherence time using a Ramsey sequence by measuring the dependence of the Ramsey contrast on a variable holding time between the two $\hat{R}_x^{(cat)}(\pi/2)$ pulses.
To characterize the $|m_F|$ dependence on coherence time, we use the spin-cat state $\ket{+}_{5/2}$ and the kitten states $\ket{+}_{3/2}$ and $\ket{+}_{1/2}$ as the initial states, where the kitten states are given as follows:
\begin{align}
    \begin{split}
        \ket{+}_{3/2} = \qty(\ket{-3/2}+i\ket{+3/2})/\sqrt{2}, \\
        \ket{+}_{1/2} = \qty(\ket{-1/2}+i\ket{+1/2})/\sqrt{2}.
    \end{split}
    \label{eq:kitten}
\end{align}
The preparation of these cat and kitten states begins with a state-selective optical pumping using a linearly polarized beam to populate the $\ket{0}_{sc}$, $\ket{-3/2}$, and $\ket{-1/2}$ states, respectively. We note that the subsequently applied $\hat{R}_x^{(cat)}(\pi/2)$ pulse can generate the kitten states with equal pulse duration. The Ramsey oscillations for the spin-cat state and the kitten states at a short holding time are shown in Fig.~\ref{fig:lifetime}(a). The phase of the cat and kitten qubits oscillates at a frequency corresponding to the Zeeman splitting between each qubit state. The Ramsey frequency is proportional to $|m_F|$, with a slope of $1.651(6)\,\text{kHz}$ (Fig.~\ref{fig:lifetime}(b)).

From the time dependence of the Ramsey contrast for a long holding time (Fig.~\ref{fig:lifetime}(c)), we extract the coherence time of the spin-cat state as $T_2^* = 94(10)~\mathrm{ms}$. We compare the coherence time of the spin-cat qubit and that of the kitten qubits and observe that $T_2^*$ is inversely proportional to $\abs{m_F}$, with a coefficient of $251(21)~\mathrm{ms}$ (Fig.~\ref{fig:lifetime}(d)). Although the spin-cat state exhibits enhanced sensitivity to magnetic field fluctuations as shown here, we expect to achieve a coherence time exceeding $10\,\text{s}$, similar to simple two-level systems, by implementing magnetic field stabilization~\cite{Yang2025Minute} or applying dynamical decoupling sequences~\cite{Souza2012}.

\subsection{Spin relaxation time measurement}
In the spin-cat encoding scheme, bit-flip errors, which are uncorrectable for this encoding, originate from hopping errors that change the sign of $m_F$. 
To investigate how the spin relaxation time $T_1$ scales with $|m_F|$, we separately prepared $\ket{0}_{sc}$, $\ket{-3/2}$, and $\ket{-1/2}$ as initial states. Following a variable time delay, we apply a selective pulse to remove atoms in the $m_F > 0$ manifold from the trap.
This selective removal pulse enables us to detect hopping errors into the $m_F>0$ manifold as atomic loss. 

At near-zero magnetic field, the spin relaxation time $T_1$ becomes longer as the magnitude of the $|m_F|$ of the initial state increases (Fig.~\ref{fig:lifetime}(e)). The measured relaxation time $T_1$ follows a linear relation with a slope of $7.5(1.6)\,\text{s}$, reaching $T_1=26_{-10}^{+36}\,\text{s}$ for $m_F=-5/2$. This result substantiates the efficacy of the spin-cat qubit's redundancy against bit-flip errors. Although the demonstrated $T_1$ time for the spin-cat qubit is already sufficiently long, further improvement can be achievable at higher magnetic fields, as previously demonstrated in a two-level system~\cite{Jenkins2022}.

%%%%%%%%%%%%%%%%%%%%%%%%%%%%%%%%%%%%%%%%%%%%%%
% Noise bias 
%%%%%%%%%%%%%%%%%%%%%%%%%%%%%%%%%%%%%%%%%%%%%%
\section{Benchmarking Noise-Bias Structure}
\label{sec:NoiseBias}
In neutral atom quantum processors, gate errors are typically larger than the idling errors. Thus, gate errors, rather than idling errors, should be the primary contribution to the noise bias feature of the spin-cat qubit. To investigate the noise bias characteristic for single-qubit gates, we perform noise-bias DRB sequence~\cite{Claes2023,Qing2024FTQC} to measure the dephasing error probability ($p_D$) and non-dephasing error probability ($p_{ND}$) of the $\mathbb{D}_8$ dihedral gates~\cite{Dugas2015}. These gates belong to single-qubit $\mathbb{D}_8$ dihedral group which is generated by $\hat{X}$ and $\hat{T}$ gates. The $\hat{T}$ gate is a $\pi/4$ rotation around the $z$-axis, defined as $\hat{T} = \ket{0}_{sc}\bra{0}_{sc} + e^{i\pi/4}\ket{1}_{sc}\bra{1}_{sc}$.

\begin{figure}[t]
    \centering
    \includegraphics[width=\linewidth]{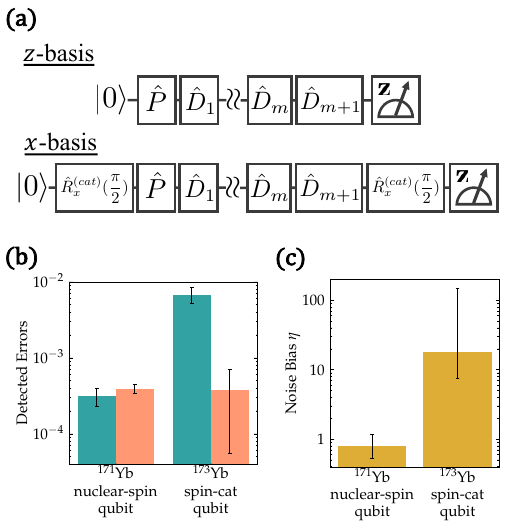}
    \caption{\textbf{Benchmarking noise bias characteristic for single-qubit gates.} 
    \textbf{(a)} Circuit diagrams for the $z$-basis and $x$-basis measurements in the noise-bias dihedral randomized benchmarking (DRB) protocol. 
    \textbf{(b)} Detected non-dephasing errors (orange, right bars in each qubit) and dephasing errors (cyan, left bars in each qubit) extracted from DRB results. For the $\fermidit$ atom, we perform the level~2 CG measurement in the last readout. 
    \textbf{(c)} Noise bias comparison between a nuclear-pin qubit consisting of only two ground sublevels in $^{171}$Yb, and the spin-cat qubit with six Zeeman sublevels in $^{173}$Yb. The quantification of the noise bias, $\eta$, is extracted from the ratio of the dephasing error probability to the non-dephasing error probability. While the nuclear-spin qubit exhibits no noise bias ($\eta=0.8_{-0.3}^{+0.4}$), the spin-cat qubit has a significant and finite noise bias of $\eta=18_{-11}^{+132}$. Error bars represent $1\sigma$ confidence intervals.}
    \label{fig:bias}
\end{figure}

The DRB protocol follows a procedure similar to CRB, with the key difference being the use of random sequences from the $\mathbb{D}_8$ group instead of the Clifford group. For noise-bias characterization, the noise-bias DRB protocol involves two separate experiments: one with preparation and measurement in the $z$-basis and the other in the $x$-basis. The two DRB circuits for the $z$-basis and the $x$-basis measurements are illustrated in Fig.~\ref{fig:bias}(a). By comparing the decay curves from these two circuits, one can separately extract the dephasing and non-dephasing error probabilities, enabling the characterization of biased noise channels. 

The DRB circuit includes a gate $\hat{P}$ randomly sampled from the Pauli group, followed by $m$ gates $\hat{D}_k$ randomly sampled from the $\mathbb{D}_8$ dihedral group, and an inverse gate $\hat{D}_{m+1} = (\hat{D}_m\dots \hat{D}_2\hat{D}_1\hat{P})^{-1}$ to return the population to the initial state. 
To construct all 22 gates in the $\mathbb{D}_8$ dihedral group, we decompose them into the $\hat{R}_x(\pi)$ and $\hat{R}_z(5\pi/4)$ pulse. A complete list of these 22 gates is provided in Appendix~\ref{method:DRB}. 

Figure \ref{fig:bias}(b) shows the extracted non-dephasing and dephasing errors using level~2 CG measurement. 
The extracted non-dephasing and dephasing errors are $p_{ND}=3.7_{-3.2}^{+3.3}\times10^{-4}$ and $p_{D}=6.7_{-1.7}^{+1.5}\times10^{-3}$, respectively, showing that the non-dephasing errors are suppressed more than the dephasing errors by a factor of $ \eta=18_{-11}^{+132}$ (Fig.~\ref{fig:bias}(c)). The measurement results quantitatively demonstrate the existence of a biased noise structure in the spin-cat qubit. Notably, when performing DRB using the $\sSz$ nuclear spin qubit of $\fermibit$ (a two-level system) instead of the spin-cat qubit of $\fermidit$, we find that the non-dephasing and dephasing error probabilities are consistent within the experimental uncertainty, resulting in $\eta = 0.8_{-0.3}^{+0.4}$. This suggests that the noise bias structure is absent for single-qubit gate errors in simple two-level systems, and that the biased noise structure observed in the spin-cat qubit is protected by its redundant sublevels.

\begin{figure}[t]
    \centering
    \includegraphics[width=\linewidth]{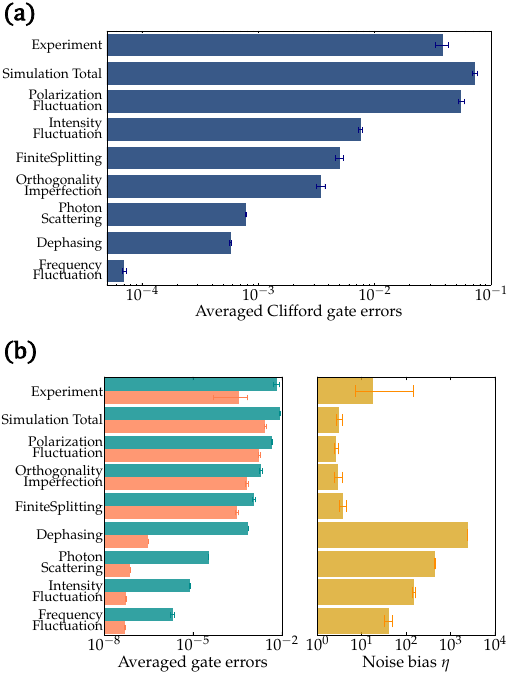}
    \caption{\textbf{Single-qubit gates error budgets.} 
    \textbf{(a)} Clifford gate error budget. The averaged Clifford gate error is analyzed for the level~2 CG measurement. This budget predicts that suppressing the technical imperfections would allow a Clifford gate fidelity exceeding 0.999. These imperfections include polarization and intensity fluctuations of the control laser, the finite Zeeman splitting, and imperfections in the orthogonality between the magnetic field and the laser propagation axis. 
    \textbf{(b)} $\mathbb{D}_8$ dihedral gate error budget and noise bias analysis. The averaged $p_{ND}$ errors (orange, lower bars) and $p_{D}$ errors (cyan, upper bars) are analyzed for the level~2 CG measurement. The experimentally measured noise bias $\eta$ is expected to be limited by dominant technical error sources, whose inherent bias is smaller than the measured result. Near-future experimental upgrades are projected to achieve a noise bias exceeding 100. ``Experiment'' in both figures refers to the averaged gate error derived from either the Clifford randomized or the $\mathbb{D}_8$ dihedral randomized benchmarking. Simulated errors in (a) and (b) are based on experimentally determined parameters (see Appendix~\ref{sec:error_budget_detail}), and the individual errors are added in quadrature to form the total simulated error, denoted as ``Simulation total''. Error bars represent $1\sigma$ confidence intervals.}
    \label{fig:budget}
\end{figure}

\section{Single-qubit gate error analysis}
\label{sec:ErrorBudget}
The spin-cat qubit, possessing the noise bias structure, is expected to reduce the space overhead for FTQC in comparison to standard qubits encoded in unbiased two-level systems~\cite{Alieris2008rep, Tuckett2018sur, Tuckett2019tail, Tucket2020sur, Higgott2023XY, Bonilla2021XZZX, Darmawan2021XZZX, Sahay2023Erasure, Huang20233D, Ruiz2025LDPC}. However, as shown in Fig.~\ref{fig:bias}(b), the spin-cat qubit exhibits a higher single-qubit gate error probability than standard qubits, making the actual benefit provided by the bias structure currently uncertain. 

We anticipate that the spin-cat qubit will hold an advantage over unbiased qubits by considering three aspects and assuming that its single-qubit gate fidelity can exceed $>0.999$. The three aspects are: (1) that the two-qubit gates limit the overall error characteristics of the computation in current neutral atom quantum processors, (2) that the state-of-the-art two-qubit gate fidelity is limited to around 0.999 due to the limited lifetime of the Rydberg state~\cite{Evered2023, Peper2024Spec, Tsai2024Bench, Infleqtion2024Univ, AC2025High, Senoo2025High}, and (3) that the Rydberg state lifetime of the large spin-$F$ system is on the same order as those of simple 2-level qubit systems. In this high-fidelity scenario, a quantum processor utilizing spin-cat qubits would exhibit the same overall physical error probability of $0.999$ as an unbiased two-level system. However, by employing bias-tailored QECCs, it is possible to achieve a higher error threshold than that of the unbiased two-level system, thereby benefiting from the biased noise structure.

To provide a pathway to achieve $>0.999$ single-qubit gate fidelity, we construct an error budget for both the Clifford gates and the $\mathbb{D}_8$ dihedral gates (see Appendix~\ref{sec:error_budget_detail} for details). The current limitation on gate fidelity stems from technical factors, such as shot-to-shot polarization and intensity fluctuations of the control laser, the finite Zeeman splitting effect due to a slow gate control, and the imperfect orthogonality between the applied magnetic field and the laser irradiation axis (Fig.~\ref{fig:budget}(a)). We expect that a fidelity of 0.999 could be realistically achievable through engineering improvements.

The DRB experimental results reveal that the spin-cat qubit exhibits a finite noise bias; however, a larger noise bias is preferable for obtaining a higher error threshold. Figure \ref{fig:budget}(b) shows that the magnitude of the bias is limited by dominant technical noise sources, which have a relatively small bias structure. By suppressing errors originating from these weakly biased noise sources, a larger overall bias is expected to be achieved.

\section{Summary and Outlook}
\label{sec:Discussion}
The demonstration of single-qubit gates for the spin-cat qubit in an optical tweezer array establishes it as a promising candidate for bias-tailored QECCs. Our comprehensive measurements, including lifetime characterization, the Clifford randomized benchmarking, and the $\mathbb{D}_8$ dihedral randomized benchmarking, reveal that the $Z$-biased noise structure is protected by the redundant intermediate energy levels between the qubit states, manifesting a distinct difference from two-level systems.

This work also achieves unique milestones in controlling a large spin-$F$ system. The single-beam Raman technique, generalized for a large spin-$F$ system, provides the condition to engineer high-fidelity single-qubit gates for the spin-cat qubit. Utilizing this generalized formula, we achieve a fast covariant $\text{SU}(2)$ rotation for arbitrary rotation angles using an optical laser beam. This covariant $\text{SU}(2)$ rotation satisfies the rank-preserving condition, highlighting the feasibility of a spin-cat qubit for FTQC.

These results pave the way for hardware-efficient QEC with biased qubits. A remaining gadget required for the practical usage of the biased qubit associated with the rank-preserving $\text{CNOT}$ gate could be implemented via a fast $\hat{X}$ gate while shelving atoms to the Rydberg manifold~\cite{Cong2022Leak,Omanakuttan2024spin}. 
The demonstrated single-beam Raman SU(2) rotation facilitates this feasibility. We also envision that employing erasure conversions~\cite{Sahay2023Erasure} will allow us to use the same gate set as unbiased qubit systems for the spin-cat qubit. This will enable us to achieve both a high-fidelity gate set and a high error threshold. Furthermore, controlling larger spin systems opens the door for exploring novel high-dimensional spin QECCs in single-particle systems~\cite{Victor2020, Gross2021, Jain2024, Aydin2025}.
%total 244 words, whole=

%%%%%%%%%%% Acknowledgment %%%%%%%%%%%
\begin{acknowledgments}
We acknowledge Toshihiko Shimasaki for earlier contributions to the buildout of the optical systems. We thank Takaya Matsuura, Jonathan A. Gross, Shubham P. Jain, Milad Marvian, Vikas Buchemmavari, Ivan H. Deutsch, and Sivaprasad Omanakuttan for insightful discussions. We also thank Koki Ono, Luca Asteria, Amar Vutha, and Sebastian Hofferberth for helpful conversations. This work was supported by Grants-in-Aid for Scientific Research of JSPS (No. JP22K20356, JP24K16975, JP24H00943, JP25K00924), JST CREST (No. JPMJCR1673 and No. JPMJCR23I3), MEXT Quantum Leap Flagship Program (MEXT Q-LEAP) Grant No. JPMXS0118069021, JST Moonshot R\&D (Grants No. JPMJMS2268 and No. JPMJMS2269), JST ASPIRE (No. JPMJAP24C2), JST PRESTO (No. JPMJPR23F5), the Matsuo Foundation, and JST SPRING (Grant No. JPMJSP2110).
\end{acknowledgments}

%%%%%%%%%%% Availability %%%%%%%%%%
\section*{Data availability}
The data that support the plots within this paper and other findings of this study are available from the corresponding author upon reasonable request.

%%%%%%%%%%%%%%%%%%%%%%%% Appendix %%%%%%%%%%%%%%%%%%%%%%%%
\appendix
\section{\MakeUppercase{method}}
\subsection{Trapping and Imaging of $\fermidit$ Atoms}
The experimental sequence is similar to that described in our previous works~\cite{Nakamura2024, Kusano2025}, but here we use the $\fermidit$ isotope. 
After a 0.8\,s loading period into a 3D magneto-optical trap (MOT) performed on the $\sSz $-$ \tPo$ electric dipole transition near 556\,nm with a natural linewidth of about $2\pi\times182\,\text{kHz}$, $\fermidit$ atoms are loaded into an optical tweezer array produced by a laser near 532\,nm (Verdi V-10, Coherent). A phase-only spatial light modulator (X15213-L16, Hamamatsu) and a 0.6-numerical aperture (NA) objective lens (Special Optics) are used to form a $6\times6$ tweezer array with site separation of $6\,\si{\micro\meter}$. To prepare a single atom per tweezer site, we apply light-assisted collision (LAC) beams, the same laser beams as those for the 3D MOT, for 100\,ms with a magnetic field (perpendicular to the tweezer beam propagating axis) of 0.13\,mT and a trap depth of 1.09\,mK. The LAC beams is red-detuned from the $\sSz ~ (F=5/2) $-$ \tPo ~(F'=7/2)$ resonance frequency. After the LAC, we image single $\fermidit$ atoms utilizing the $\sSz $-$ \sPo$ electric dipole transition near 399\,nm with a natural linewidth of about $2\pi\times29\,\text{MHz}$. The 399\,nm and the 556\,nm laser beams irradiate the atoms for 12\,ms, simultaneously. The 556\,nm beams share the same path as the 3D MOT beams and are used to cool the atoms during imaging. The emitted photons are collected by another 0.6-NA objective lens (Special Optics) and subsequently focused onto an electron-multiplying charge-coupled-device camera (iXon-Ultra-897, Andor). After the imaging, the atoms are cooled with the laser beams near 556 nm to $34(2)\,\si{\micro K}$ in a trap depth of  1.09\,mK at nearly zero magnetic field.

\begin{figure}[t]
    \centering
    \includegraphics[keepaspectratio,width=0.9\linewidth]{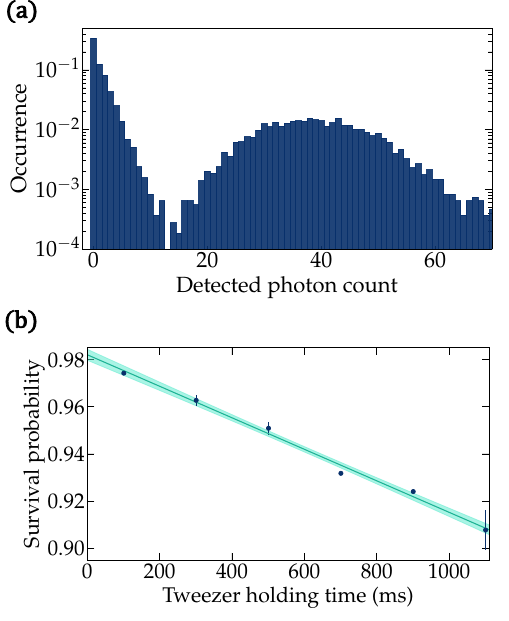}
    \caption[Imaging characterization for $\fermidit$ atom]{
        \label{fig:imaging}
        \textbf{Imaging characterization for $\fermidit$ atom.}\quad 
        \textbf{(a)} Histogram of detected photons from $\fermidit$ with an exposure time of 12\,ms. The discrimination fidelity is 0.99984(6). 
        \textbf{(b)} Extrapolation of the actual survival probability of imaging. Tweezer holding time between two images degrades the detected survival probability. The survival probability at zero holding time is extrapolated to be 0.982(2). The shaded region represents $1\sigma$-confidence intervals from the linear fit, and error bars represent $1\sigma$ confidence intervals from the measurements.
        }
\end{figure}

We characterize the imaging performance using a model-free method~\cite{Manetsch20246100}, which enables us to evaluate the fidelity precisely without imposing any assumptions. With this approach, we obtain a discrimination fidelity of 0.99984(6) with a survival probability of 0.9751(4) under an exposure time of 12\,ms in a 1.09\,mK trap. We find that the holding time during each imaging session causes heating and underestimates the survival probability during the imaging. To obtain the actual imaging survival probability, we measure the dependence of the survival probability on the holding time during each imaging and extrapolate the actual probability at zero holding time with the value of 0.982(2) (shown in Fig.~\ref{fig:imaging}). The measured discrimination fidelity and survival probability for imaging are almost comparable with the values stated in Ref.~\cite{Karim2025single}, where a deeper trap was utilized.

\subsection{Optical Pumping for State-Selective Readout and Initialization}
\label{method:SSR}
The imaging method described above can only determine whether the atoms are in the $^{1}S_{0}$ ground state or not, but is unable to acquire the information on the population of each substate. To perform state-selective readout (SSR) and the initialization of the 6-spin components in the $\sSz$ state, we utilize optical pumping (OP) in a large magnetic field setting (Fig.~\ref{EDfig:SSR}). A magnetic field of 4.6\,mT is applied, resulting in a Zeeman shift of $27.6\times m_{F'}\,\text{MHz}$ within the $\tPo, F'=7/2$ manifold. This shift enables selective optical pumping to the desired Zeeman sublevels in the $\sSz$ manifold via the $\sSz $-$ \tPo ~(F'=7/2)$ transition.

We apply the pump light at a wavelength of 556\,nm with linear polarization oriented perpendicular to the direction of the applied magnetic field.
By applying OP pulses with different frequencies tuned to each $|m_{F'}- m_F| =1$ transition sequentially, we can pump the population into the desired state. The schematic illustration of OP pulse sequences is shown in Fig.~\ref{EDfig:SSR}(a).

By exploiting this selective OP approach, we perform a destructive SSR, where the spin populations other than the target spin state are converted to atomic loss from the trap, and the target spin state remaining in the trap is read out via atomic fluorescence imaging. The pushout beams for introducing the atomic loss mechanism are performed with frequencies on resonance with either the $\sSz ~ (F=5/2, m_F=-5/2)$-$ \tPo ~(F'=7/2, m_{F'}=-7/2)$ transition or the $\sSz ~ (F=5/2, m_F=+5/2)$-$ \tPo ~(F'=7/2, m_{F'}=+7/2)$ transition. To remove population in the spin states other than $\ket{\sSz, F=5/2, m_{F}=\pm5/2}$ states, a selective OP sequence that pumps the population to the $m_F=+5/2$ or $m_F=-5/2$ state is performed before irradiating the pushout beam. The schematic illustration of the SSR sequences is shown in Fig.~\ref{EDfig:SSR}(b).

To address the $|m_{F'}-m_F|=1$ transitions in $\sSz $-$ \tPo ~(F'=7/2)$, which are distributed across a wide frequency range of up to $193.2\,\text{MHz}$ at a magnetic field of 4.6\,mT, we use a double-pass Acousto-Optic Modulator configuration to derive the laser frequency near each corresponding resonance. When performing the selective OPs, each OP pulse is red-detuned by a different value from its corresponding transition frequency to suppress atom loss due to the pumping process. Additionally, the OP pulse for each Zeeman transition is applied with a different pulse width and a different number of iterations. 
The trap is set to a deep depth of $1.09\,\text{mK}$ to suppress atom loss during OP, and is reduced to $0.1\,\text{mK}$ to facilitate atom loss when performing the pushout procedure.
The pushout beam is a single pulse with a duration of $20\,\text{ms}$.

\begin{figure*}[t]
    \centering
    \includegraphics[keepaspectratio,width=\textwidth]{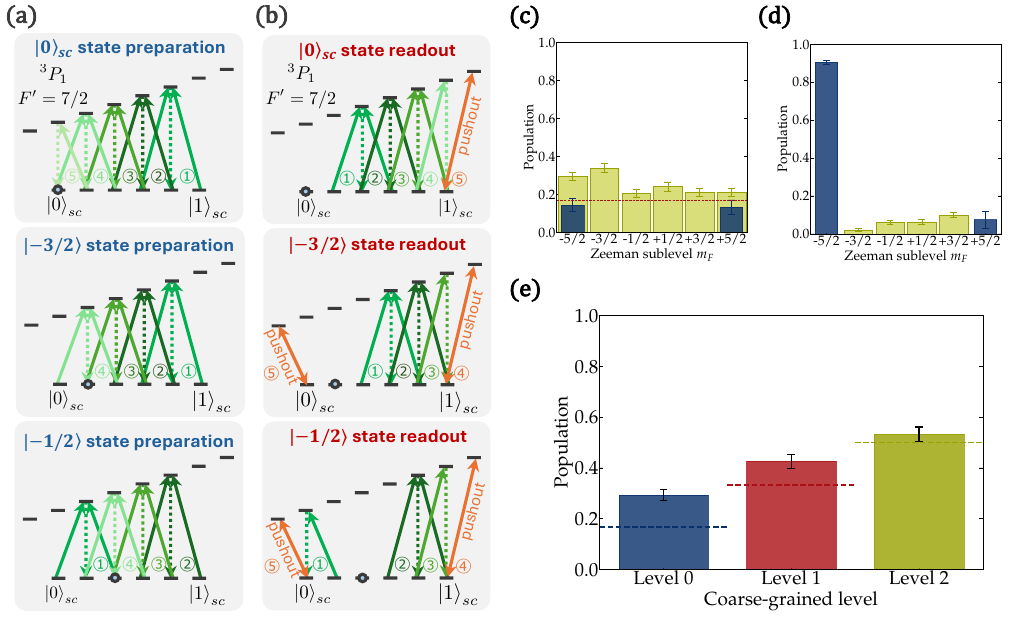}
    \caption[State-selective readout]{
        \label{EDfig:SSR}
        \textbf{State-selective optical pumping}\quad 
        \textbf{(a)} Schematic illustration of state-selective initialization of the $\ket{0}_{sc}$, $\ket{m_F=-3/2}$, and $\ket{m_F=-1/2}$ states. Multiple optical pumping pulses with different frequencies tuned to each $|m_{F'}-m_F|=1$ transition can pump the population to the desired state. 
        \textbf{(b)} Schematic illustration of state-selective destructive readout of the $\ket{0}_{sc}$, $\ket{m_F=-3/2}$, and $\ket{m_F=-1/2}$ states. Spin populations other than the target spin state are converted to atomic loss from the trap. The population in the target spin state is then detected via atomic fluorescence imaging. The state-selective pump beam is linearly polarized, and only resonant transitions by such a beam are shown. 
        \textbf{(c)} Population of all spin states in the $\sSz$ manifold without optical pumping. \textit{Fluorescence (Loss) detection} results are shown as yellow bars (blue narrow bars). The red dashed line represents the population of the spin states within the $\sSz$ manifold when maximally mixed. 
        \textbf{(d)} Population of all spin states in the $\sSz$ manifold after the initialization to the $\ket{0}_{sc}$ state. The measured $\ket{0}_{sc}$ population is 0.904(10). Fluorescence (loss) detection results are shown as yellow (blue) bars. 
        \textbf{(e)} Coarse-grained measurement result without state-initialization. The dashed lines are the expected populations for the maximally mixed $\sSz$ state. These measurements employ fluorescence detection with the error bars representing $1\sigma$ confidence intervals.
        }
\end{figure*}

The results of the state-selective readout without spin state initialization are shown in Fig.~\ref{EDfig:SSR}(c), which is reconstructed from six individual spin state measurements. When performing the destructive SSR, a population larger than the expected value of $1/6$ is observed. This suggests that the deficient selective OP causes states other than the desired readout state to remain in the trap, instead of being pumped to $\ket{\sSz, F=5/2, m_{F}=\pm5/2}$.
Therefore, we perform the \textit{loss detection}, which is capable of state readout without selective OP, by directly irradiating a pushout beam and detecting the atomic state distribution as atom loss. Under this loss detection scheme, the populations of $m_F=+5/2$ and $m_F=-5/2$ are observed to be 0.132(37) and 0.144(34), respectively, agreeing with the ideal uniform distribution of $1/6$ (red-dashed line) within the uncertainty.

To avoid ambiguity, we categorize the state-selective measurement techniques used in this work into \textit{fluorescence detection} and \textit{loss detection}. All measurements for the population in the $m_F=\pm 3/2$ and $\pm 1/2$ states reported in this paper rely on \textit{fluorescence detection} that exploits the destructive SSR. Similarly, all coarse-grained (CG) measurements (Level~0, 1, and 2) shown in Figs.~\ref{fig:CRBResult}, \ref{fig:bias} and \ref{EDfig:SSR}(e) use \textit{fluorescence detection}. The $m_F=\pm 5/2$ state measurements in Figs.~\ref{fig:dynamics}(a-c), \ref{fig:lifetime}(b), and \ref{EDfig:SSR}(d, e) utilize \textit{loss detection}.

Moreover, to accurately determine spin populations, we calibrate the spin-selective readout results by accounting for the survival probabilities during both state-selective initialization and state-insensitive imaging. This calibration is performed by comparing two distinct measurements: one including the state-selective readout pulse and another omitting it. Since both sequences are subject to background atomic loss during the initialization and imaging stages, comparing the survival probabilities from these two cases allows us to decouple the spin population from the atomic loss. This normalization method is applied to all population measurements presented in this work. 

Next, we measure the spin state distribution after initializing the system to state $\ket{\sSz, F=5/2, m_{F}=-5/2}$. In this experiment, we use selective OP to prepare the initial state $m_{F}=-5/2$, followed by state-selective measurement, and the resulting state distribution is shown in Fig.~\ref{EDfig:SSR}(d). The values are also reconstructed from six individual measurements of the spin state. 
To accurately determine the population in the states, fluorescence detection and loss detection are used for measuring the population in the $m_F=\pm 3/2, \pm 1/2$ states and the $m_F=\pm 5/2$ states, respectively.

We observe that 90.4(1.0)\,\% of the population is initialized into the $m_F=-5/2$ state after pumping. For the intermediate states ($|m_F| \leq 3/2$), we expect their true population to be even lower than observed, also due to overestimation caused by the non-optimal selective OP. Furthermore, the $m_F=+5/2$ state retains a finite population of $7.5(4.5)\,\%$ due to pumping error to the $m_F=+3/2$ state.

We emphasize that while spin-selective measurement of multi-spin ensembles has been conventionally performed in cold atom experiments using the Stern-Gerlach technique~\cite{Sleator1992, Taie2010}, this work marks the first attempt to measure multi-spin states on a single atom state-selectively. Although the achieved fidelity for both SSR and initialization is not optimal, it is sufficient for characterizing the multi-spin dynamics, as shown in Fig.~\ref{fig:dynamics}. We anticipate that the demonstrated state-selective method will open the door for exploring $\text{SU}(N)$ physics utilizing multi-spin qudit systems~\cite{taie2022observation,sonderhouse2020} within the optical tweezer array platform.

%%%%%%%%%%%%%%%%%%%%%%%%%%%%%%%%%%%%%%%%%%%%%%%%%%%
% Limitation of state-selective pumping fidelity
%%%%%%%%%%%%%%%%%%%%%%%%%%%%%%%%%%%%%%%%%%%%%%%%%%%
\begin{figure*}[t]
    \centering
    \includegraphics[keepaspectratio,width=\textwidth]{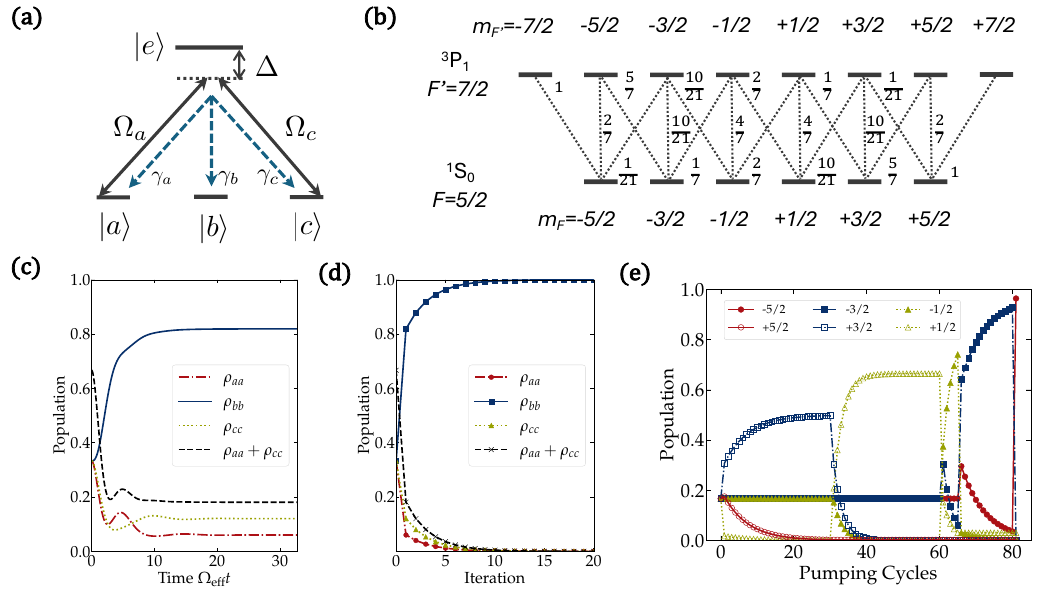}
    \caption{
        \textbf{Selective optical pumping limitation due to dark state generation.}\quad 
        \textbf{(a)} Schematic of the four-level system involved in the selective optical pumping process. The linearly polarized pumping light couples the three states: $\ket{a}$, $\ket{c}$ and $\ket{e}$. A dark state $\ket{D} = (\Omega_c\ket{a}-\Omega_a\ket{c})/\sqrt{\Omega_a^2 + \Omega_c^2}$ does not couple to the pumping light. \textbf{(b)} Squared values of the Clebsch-Gordan coefficients for the $\sSz\,(F = 5/2)$ - $\tPo\,(F' = 7/2)$ transitions. 
        \textbf{(c)} Numerical simulation of the selective optical pumping process for the four-level system spanned by $\ket{a} = \ket{\sSz,m_F=-1/2}$, $\ket{b} = \ket{\sSz,m_F=+1/2}$, $\ket{c} = \ket{\sSz,m_F=+3/2}$ and $\ket{e}=\ket{\tPo,F'=7/2,m_{F'}=+1/2}$ states. In this simulation, we simply use parameters characterized by Clebsch-Gordan coefficients: $\{\Delta,\Omega_a, \Omega_c, \gamma_a, \gamma_b, \gamma_c\} = \{0, \sqrt{2/7}, \sqrt{1/7}, 2/7, 4/7, 1/7\}$. The effective Rabi frequency is given by $\Omega_{\textrm{eff}} = \sqrt{\Omega_a^2+\Omega_c^2}$. 
        \textbf{(d)} Populations after multiple iterations of the selective optical pumping process. When coherence between the ground states is lost, repeating the pumping process multiple times increases the final population of the target state $\ket{b}$. 
        \textbf{(e)} Numerical simulation results of the selective optical pumping process for initializing to the $\ket{0}_{sc}$ state after multiple pumping cycles. The horizontal axis represents the accumulated pumping cycles for each $m_{F'}$ excitation. The OP sequence used here is the same as that in Fig.~\ref{EDfig:SSR}\,(a). In accordance with the experimental parameters associated with Fig.~\ref{EDfig:SSR}(d), the repetitions of OP pulses for each excitation ($m_{F'} = +3/2, +1/2, -1/2, -3/2, -5/2$) are 30, 5, 5, 15, and 1, respectively. The simulated final population in the $\ket{0}_{sc}$ state is 0.964.
        }
        \label{fig:OP_simulation}
\end{figure*}
\subsection{Limitation of state-selective pumping fidelity}
As shown in Fig.~\ref{EDfig:SSR}, discrepancies between the experimental and ideal distributions are observed. We attribute this infidelity of the selective optical pumping to the generation of dark states by the linearly polarized pumping light.

To see this, we consider a simple model where the OP laser frequency is near-resonant with a specific sublevel $m_{F'}$ within the $|m_{F'}|<5/2$ manifold. While the Zeeman shift spectrally resolves the sublevels of the $\tPo$ state at a magnetic field of 4.6\,mT, owing to the low magnetic sensitivity of the ground-state nuclear spin (Zeeman shift: $9.5\,\text{kHz/mT}\times m_F$), the application of linearly polarized pumping light generates a coherent coupling among the three states: $m_F = m_{F'} \pm 1$ and $m_F'$ (Fig.~\ref{fig:OP_simulation}(a)). 

Under these conditions, a dark state that does not couple to the pumping light is generated:
\begin{equation}
    \ket{D} = \frac{1}{\sqrt{\Omega_a^2 + \Omega_c^2}}(\Omega_c \ket{a} - \Omega_a \ket{c}),
    \label{eq:dark_state}
\end{equation}
where $\ket{a} = \ket{\sSz,m_F=m_{F'}-1}$, $\ket{c} = \ket{\sSz,m_F=m_{F'}+1}$, and $\Omega_a$ ($\Omega_c$) is the Rabi frequency between states $\ket{a}$ ($\ket{c}$) and $\ket{e} = \ket{\tPo,m_{F'}}$. This dark state does not include the excited state $\ket{e}$, and thus the population in this state is not pumped to the target state $\ket{b} = \ket{\sSz,m_F=m_{F'}}$. 

The quantitative effect of this dark state is evaluated using the master equation under the following Hamiltonian and collapse operators:
\begin{equation}
    \hat{H} = \hbar\Delta\ket{e}\bra{e} + \hbar\Omega_a(\ket{e}\bra{a} + \ket{a}\bra{e}) + \hbar\Omega_c(\ket{e}\bra{c} + \ket{c}\bra{e})),
\end{equation}
\begin{equation}
    \hat{C}_n = \sqrt{\gamma_n}\ket{n}\bra{e}, \text{ for } n=a,b,c,
\end{equation}
where $\gamma_n$ is the spontaneous decay rate from the excited state to the ground states. 

Figure~\ref{fig:OP_simulation}(c) shows the population dynamics for a four-level system spanned by $\ket{a} = |\sSz, m_F = -1/2\rangle$, $\ket{b} = \ket{\sSz,m_F=+1/2}$, $\ket{c} = \ket{\sSz,m_F=+3/2}$ and $\ket{e}=\ket{\tPo,F'=7/2,m_{F'}=+1/2}$ states. The simulation begins with a uniform distribution across states $\ket{a},\ket{b}$ and $\ket{c}$. As the pumping process proceeds, population is transferred to the target state $\ket{b}$. However, a fraction of the spin population remains trapped in the dark state formed by $\ket{a}$ and $\ket{c}$, resulting in incomplete pumping to the target state $\ket{b}$. In the steady state, the population ratio between the $\ket{a}$ and $\ket{c}$ states is determined by:
\begin{equation}
    \frac{P_a}{P_c} = \frac{\Omega_c^2}{\Omega_a^2} 
                    = \abs{\frac{\langle{F' \, m_{F'} | F \, m_F=m_{F'}+1; 1\, -1 \rangle}}{\langle{F'\, m_{F'} | F\, m_F=m_{F'}-1; 1\, + 1 \rangle}}}^2,
\end{equation}
where the second equality is derived from the Clebsch-Gordan coefficients for the respective transitions (Fig.~\ref{fig:OP_simulation}(b)). Note that the second equality assumes a simple model where state mixing in the excited levels is neglected. For instance, when the tensor lightshift is non-negligible, the actual Rabi frequencies deviate from those characterized by the Clebsch-Gordan coefficients.

While the steady-state ratio $P_a/P_c$ is determined by the dark state,
the overall pumping ratio $P_b / (P_a + P_c)$ depends on the initial distribution when coherence between the ground states is lost. Therefore, by repeating the pumping cycle and allowing the coherence to decay between iterations, the final population pumped to the target state $\ket{b}$ can be increased (Fig.~\ref{fig:OP_simulation}(d)). We experimentally confirmed this improvement in pumping fidelity through repeated selective OP sequences. A single iteration of the sequence shown in Fig.~\ref{EDfig:SSR}(a) yields a population of approximately 80\% in the $\ket{0}_{sc}$ state, whereas multiple iterations achieve a fidelity of 90.4(1.0)\% (Fig.~\ref{EDfig:SSR}(d)).

Finally, we evaluate the contribution of the dark state to the initialization to the $\ket{0}_{sc}$ state. We perform numerical simulations of the selective optical pumping process for the six-level system. 
In accordance with the experimental parameters, we apply the pumping light to each $m_{F'}$ excitation for 1-30 iterations.
The results are shown in Fig.~\ref{fig:OP_simulation}(e), where we plot the population in the $\ket{0}_{sc}$ state after applying the selective OP sequence shown in Fig.~\ref{EDfig:SSR}\,(a) as a function of the number of OP iterations. After the initialization, the population in the $\ket{0}_{sc}$ state reaches 0.964, suggesting that the dark state generation is one of the dominant factors limiting the fidelity of state initialization in our experiment. In near-future experiments, this limitation can be overcome by optimizing the selective OP iterations, or by employing circularly polarized pumping light under a strong magnetic field, which eliminates the formation of dark states.

%%%%%%%%%%%%%%%%%%%%%%%%%%%%%%%%%%%%%%%%%%%%%%
% LightShift
%%%%%%%%%%%%%%%%%%%%%%%%%%%%%%%%%%%%%%%%%%%%%%%%%%%
\section{Single-Beam Raman Transition for a Large Spin-$F$ System}
\subsection{Lightshift by a Detuned Laser}
\label{method:Lightshift}
Here, we provide a detailed formulation of the single-beam Raman transition and introduce the quantum master equation used for numerical simulations in this paper.

First, consider the control laser to be detuned from the $\sSz$-$\tPo$ resonance with an intensity of $I_L$ and polarized in the $\mathbf{e}_L$ direction. The polarization vector is given in the spherical coordinate representation as
\begin{align}
    \begin{split}
        \vb*{e}_L 
        = &\sin2\varphi\cos\qty(\chi+\frac{\pi}4)\vb*{e}_{+1} \\
        + &\sin2\varphi\cos\qty(\chi-\frac{\pi}4)\vb*{e}_{-1} \\
        + &\cos2\varphi\vb*{e}_0,
    \end{split}
    \label{eq:polarization}
\end{align}
\begin{equation*}
    0 \leq \varphi < \pi,\quad -\pi/2 \leq \chi \leq \pi/2,
\end{equation*}
where $\varphi$ represents the azimuth and $\chi$ represents the ellipticity angle. The laser is linearly polarized when $|\chi|=0$ or $\pi/2$, and circularly polarized when $\varphi=\pi/4$ and $\chi=\pm \pi/4$. When this control laser illuminates the atom, the light shift imparted by the laser beam is given by the following expression~\cite{DeutschJessen2010},
\begin{align}
    \label{eq:LS_H}
        &\hat{H}_{LS}  = \frac{3\pi c^2\Gamma}{2\omega_0^3} \times \sum_{q=0,\pm 1}\sum_{F' m_e} \notag \\
        &\qty(\frac{\bra{F m_b}{\vb*{e}^*_L\vdot\hat{\vb*{D}}_{F F'}^{(q)}}\ket{F' m_e} \bra{F' m_e}{\vb*{e}_L\vdot\hat{\vb*{D}}_{F' F}^{\dagger(q)}}\ket{F m_a}}{ \Delta_L - \Delta_{\text{HFS}}(F')})I_L^{(q)} \notag \\
        &\times \ket{F m_b}\bra{F m_a}.
\end{align}
Here, $c$ is the speed of light, $\omega_0$ is the resonant frequency of the $\sSz$-$\tPo$ transition, $\Gamma$ is the natural linewidth of the $\tPo$ state, $\Delta_L$ is the control laser frequency, and $\Delta_{\text{HFS}}(F')$ is the resonant frequency for $F'$. Furthermore, $\hat{\vb*{D}}_{F F'}^{(q)}$ is the orbital annihilation operator, which can be written as
\begin{align}
        \hat{\vb*{D}}^{(q)}_{F F'} = \vb*{e}_q(-1)^{F'+J+1+I}\sqrt{\qty(2F'+1)(2J+1)}\times \notag \\
        \sum_{m_g m_e}\braket{F m_g}{F' m_e; 1 -q}\qty{\mqty{J' & J & 1 \\ F & F' & I}}\ket{F m_g}\bra{F' m_e},
\end{align}
where $\{...\}$ denotes the Wigner $6j$ symbol. The Clebsch-Gordan coefficients are given by
\begin{align}
    &\langle{F m_g | F' m_e; 1 -q \rangle} \notag \\  
    =&(-1)^{F'-1+m_g}\sqrt{2F+1}\mqty(F' & 1 & F \\ m_e & -q & -m_g),
\end{align}
where $\left(...\right)$ is the Wigner $3j$ symbol.

The Hamiltonian Eq.~(\ref{eq:LS_H}) can be engineered using three parameters: the control laser intensity $I_L$, the laser frequency detuned from the transitions' resonance frequency $\Delta_L$, and the laser polarization $\mathbf{e}_L$. In this work, we adjust the laser frequency. Also, since Eq.~(\ref{eq:LS_H}) represents a diagonal matrix, the Hamiltonian can be simply written as Eq.~(\ref{eq:Lightshift}).

%%%%%%%%%%%%%%%%%%%%%%%%%%%%%%%%%%%%%%%%%%%%%%
% Master Equation
%%%%%%%%%%%%%%%%%%%%%%%%%%%%%%%%%%%%%%%%%%%%%%
\subsection{Quantum Master Equation for Spin Dynamics}
\label{method:Master}
The multi-spin dynamics induced by the laser beam used for the qubit manipulation via the single-beam Raman transition technique (Fig.~\ref{fig:dynamics}) are modeled by the following quantum master equation,
\begin{align}
        \pdv{\hat{\rho}(t)}{t} &= -\frac{i}{\hbar}\qty[\hat{H}_{rot},\hat{\rho}(t)] \notag\\ 
        &+ \sum_n\frac12\qty(2\hat{C}_n\hat{\rho}(t)\hat{C}_n^{\dagger} - \hat{\rho}(t)\hat{C}_n^{\dagger}\hat{C}_n - \hat{C}_n^{\dagger}\hat{C}_n\hat{\rho}(t)), 
        \label{eq:lindblad}
\end{align}
\begin{align}
    \hat{H}_{rot} = \hat{d}^{(F)}\qty(\frac{\pi}{2})\hat{H}_{LS}\hat{d}^{(F)\dagger}\qty(\frac{\pi}{2}).
\end{align}
Here, $\hat{C}_n$ represents the collapse operator resulting from photon scattering coming from the control laser light, which is defined as~\cite{DeutschJessen2010}
\begin{align}
    \hat{C}_q = \sqrt{\Gamma}&\sum_{F'}\frac{\Omega/2}{\Delta_L - \Delta_{\text{HFS}}(F')+i\Gamma/2} \notag \\
    &\times \qty(\vb*{e}_q^*\vdot\hat{\vb*{D}}^{(q)}_{FF'})\qty(\vb*{e}_L\vdot\hat{\vb*{D}}_{F'F}^{\dagger(q)}).
    \label{eq:collapse_photon}
\end{align}
For the simulation shown in Fig.~\ref{fig:dynamics}, we employ a fixed control laser detuning of $+11.217\,\text{GHz}$ (for covariant SU(2) rotation) or $-5.005\,\text{GHz}$ (for non-linear rotation) with respect to the $^{1}S_0 $-$ \tPo~(F'=7/2)$ transition frequency. The laser is $\sigma^+$ circularly polarized. To incorporate the imperfection of the initialization, we utilize an experimentally obtained spin population after the initialization to the $\ket{0}_{sc}$ state as $\hat{\rho}(0)$. The laser intensity in the simulation is determined by minimizing the difference between the simulation curve and the experimental data points for the $m_F = -5/2$ state dynamics.

%%%%%%%%%%%%%%%%%%%%%%%%%%%%%%%%%%%%%%%%%%%%%%
% Optimal Condition Proof
%%%%%%%%%%%%%%%%%%%%%%%%%%%%%%%%%%%%%%%%%%%%%%
\subsection{Optimal Differential Lightshift Conditions}
\label{subsec:OptimalProof}
In this section, we derive the necessary differential lightshift (DLS) conditions for realizing the $\hat{R}_x(\pi)$ and $\hat{R}_x^{(cat)}(\pi/2)$ gates, which are given as follows:
\begin{align}
    &\Delta_1:\Delta_2:\Delta_3:\Delta_4:\Delta_5 \notag \\
    &= (2n_1+1):(2n_2+1):(2n_3+1):(2n_4+1):(2n_5+1),
\end{align}
for the $\hat{R}_x(\pi)$ gate, and
\begin{align}
    &\Delta_1:\Delta_2:\Delta_3:\Delta_4:\Delta_5 \notag \\
    &= (4n_1\pm1):(4n_2\mp1):(4n_3\pm1):(4n_4\mp1):(4n_5\pm1),
\end{align}
for the $\hat{R}_x^{(cat)}(\pi/2)$ gate ($n_k \in \mathbb{Z}, ~\text{for}~k=1,2,\dots,2F$).
We then present arguments that these conditions are sufficient to produce the gates.

For notational simplicity, we use the relative lightshifts to that of the $\ket{0}_{sc}$ state throughout this section. That is, we define $\delta'_k = \delta_k - \delta_0~(k=0,1,\dots,2F)$ and subsequently omit the prime notation, treating $\delta'_k$ and $\delta_k$ interchangeably.

\textit{Necessity}-- First, we prove that Eqs.~(\ref{eq:Xgate_condition}) and (\ref{eq:Hgate_condition}) are necessary conditions.
We demand that the unitary time evolution operator $\hat{U}_{rot}^{(F)}(t)$ resulting from the single-beam Raman transition equals to a target unitary operator $\hat{U}$ at some operation time $t$. The operator $\hat{U}$ consists of constants $A$ and $B$ that satisfies the condition of $|A|^2 + |B|^2 = 1$:
\begin{equation}
    \hat{U}(t) = \mqty(B & 0 & 0 & 0 & 0 & A \\
                       0 & B & 0 & 0 & A & 0 \\
                       0 & 0 & B & A & 0 & 0 \\
                       0 & 0 & A & B & 0 & 0 \\
                       0 & A & 0 & 0 & B & 0 \\
                       A & 0 & 0 & 0 & 0 & B).
\end{equation}
The unitary time evolution operator $\hat{U}_{rot}^{(F)}(t)$ can be expressed explicitly expression with Wigner small $d$-matrix~\cite{JJSakurai2020}:
\begin{align}
    d_{m'm}^{(k)}(\beta) &:= \bra{k,m'}\exp\left(-i\frac{\beta}{\hbar}\hat{J}_y\right)\ket{k,m} \notag \\
    &= \sum_{j}(-1)^{j-m+m'} \notag \\
    &\times \frac{\sqrt{(k+m)!(k-m)!(k+m')!(k-m')!}}{(k+m-j)!j!(k-j-m')!(j-m+m')!}\notag \\
    &\times \qty(\cos\frac{\beta}2)^{2k-2j+m-m'}\qty(\sin\frac{\beta}2)^{2j-m+m'},
    \label{eq:wigner_small_d_explicit}
\end{align}
where the summation over $j$ is performed such that all the factorials in the parentheses remain non-negative.
Then, by demanding all matrix elements of $\hat{U}_{rot}^{(F)}(t)$ to coincide with those of $\hat{U}$,
the following equations can be obtained:
\begin{align}
    \label{eq:matrix1}
    &
    \begin{pmatrix}
     -\frac{\sqrt{5}}{32} & -\frac{3\sqrt{5}}{32} & -\frac{2\sqrt{5}}{32} &  \frac{2\sqrt{5}}{32} &  \frac{3\sqrt{5}}{32} &  \frac{\sqrt{5}}{32} \\
      \frac{\sqrt{5}}{32} & -\frac{3\sqrt{5}}{32} &  \frac{2\sqrt{5}}{32} &  \frac{2\sqrt{5}}{32} & -\frac{3\sqrt{5}}{32} &  \frac{\sqrt{5}}{32} \\
     -\frac{\sqrt{10}}{32} &  \frac{\sqrt{10}}{32} &  \frac{2\sqrt{10}}{32} & -\frac{2\sqrt{10}}{32} & -\frac{\sqrt{10}}{32} &  \frac{\sqrt{10}}{32} \\
      \frac{\sqrt{10}}{32} &  \frac{\sqrt{10}}{32} & -\frac{2\sqrt{10}}{32} & -\frac{2\sqrt{10}}{32} &  \frac{\sqrt{10}}{32} &  \frac{\sqrt{10}}{32} \\
     -\frac{5\sqrt{2}}{32} & -\frac{3\sqrt{2}}{32} &  \frac{2\sqrt{2}}{32} & -\frac{2\sqrt{2}}{32} &  \frac{3\sqrt{2}}{32} &  \frac{5\sqrt{2}}{32} \\
      \frac{5\sqrt{2}}{32} & -\frac{3\sqrt{2}}{32} & -\frac{2\sqrt{2}}{32} & -\frac{2\sqrt{2}}{32} & -\frac{3\sqrt{2}}{32} &  \frac{5\sqrt{2}}{32}
    \end{pmatrix}
    \begin{pmatrix}
     1 \\
     e^{-it\delta_1} \\
     e^{-it\delta_2} \\
     e^{-it\delta_3} \\
     e^{-it\delta_4} \\
     e^{-it\delta_5}
    \end{pmatrix}
    \notag \\
    &= 0
    ,
\end{align}
\begin{equation}
    \label{eq:matrix2}
    \begin{pmatrix}
     \tfrac{1}{32} & \tfrac{5}{32} & \tfrac{10}{32} & \tfrac{10}{32} & \tfrac{5}{32} & \tfrac{1}{32} \\
     \tfrac{5}{32} & \tfrac{9}{32} &  \tfrac{2}{32} &  \tfrac{2}{32} & \tfrac{9}{32} & \tfrac{5}{32} \\
     \tfrac{5}{16} & \tfrac{1}{16} & \tfrac{2}{16} & \tfrac{2}{16} & \tfrac{1}{16} & \tfrac{5}{16} \\
     -\tfrac{1}{32} & \tfrac{5}{32} & -\tfrac{10}{32} &  \tfrac{10}{32} & -\tfrac{5}{32} & \tfrac{1}{32} \\
     -\tfrac{5}{32} & \tfrac{9}{32} & -\tfrac{2}{32} &  \tfrac{2}{32} & -\tfrac{9}{32} & \tfrac{5}{32} \\
     -\tfrac{5}{16} & \tfrac{1}{16} & -\tfrac{2}{16} &  \tfrac{2}{16} & -\tfrac{1}{16} & \tfrac{5}{16}
    \end{pmatrix}
    \begin{pmatrix}
    1 \\
    e^{-it\delta_1} \\
    e^{-it\delta_2} \\
    e^{-it\delta_3} \\
    e^{-it\delta_4} \\
    e^{-it\delta_5}
    \end{pmatrix}
    =
    \begin{pmatrix}
    B \\[2pt]
    B \\[2pt]
    B \\[2pt]
    A \\[2pt]
    A \\[2pt]
    A
    \end{pmatrix}
    ,
\end{equation}
Solving for Eq.~(\ref{eq:matrix1}), we obtain
\begin{equation}
     \label{eq:solve_matrix1}
        e^{-it\delta_1}=e^{-it\delta_3}=e^{-it\delta_5},~
        e^{-it\delta_2}=e^{-it\delta_4}=1.
\end{equation}
Substituting Eq.~(\ref{eq:solve_matrix1}) into Eq.~(\ref{eq:matrix2}), we arrive at
\begin{equation}
\label{eq:solution}
\left\{ \,
    \begin{aligned}
        & e^{-it\delta_1} = e^{-it\delta_3} = e^{-it\delta_5} = B+A, \\
        & e^{-it\delta_2} = e^{-it\delta_4} = 1, \\
        & A = B-1.
    \end{aligned}   
\right.
\end{equation}
To perform the $\hat{R}_x(\pi)$ gate, it requires that $B=0$. The DLS $\Delta_{k+1}=\delta_{k+1}-\delta_k~(k=0,1,\dots,2F-1)$ can then be written using integers $n_1, n_2, n_3, n_4, n_5$ as $\Delta_{k+1} = (2n_{k+1} + 1)\pi/t$ (for $k=0,1,\dots,2F-1$).
Therefore, to realize the $\hat{R}_x(\pi)$ gate, the ratio of the five DLSs must satisfy the condition of
\begin{align*}
    &\Delta_1:\Delta_2:\Delta_3:\Delta_4:\Delta_5 \notag \\
    &= (2n_1+1):(2n_2+1):(2n_3+1):(2n_4+1):(2n_5+1).
\end{align*}

For the $\hat{R}_x^{(cat)}(\pi/2)$ gate, it requires that $|A|=|B|$. This leads to the solutions $A=\frac{-1\pm i}{2}, B=\frac{1\pm i}{2}$. From these, the DLSs $\Delta_k$ can be again written using integers as $\Delta_{k+1} = (4n_{k+1} \mp (-1)^{k+1})\pi/t$ (for $k=0,1,\dots,2F-1$). The ratio of the five DLSs must therefore fulfill the condition of
\begin{align*}
    &\Delta_1:\Delta_2:\Delta_3:\Delta_4:\Delta_5 \notag \\
    &= (4n_1\pm1):(4n_2\mp1):(4n_3\pm1):(4n_4\mp1):(4n_5\pm1).
\end{align*}

\textit{Sufficiency}-- Next, we demonstrate that Eqs.~(\ref{eq:Xgate_condition}) and~(\ref{eq:Hgate_condition}) are sufficient conditions. If the five DLSs satisfy Eq.~(\ref{eq:Xgate_condition}), the unitary time evolution $\hat{U}_{rot}^{(F)}(t)$ coincides with $\hat{R}_x(\pi)$ at $t = (2n_1+1)\pi/\Delta_1$. Furthermore, if the five DLSs satisfy Eq.~(\ref{eq:Hgate_condition}), $\hat{U}_{rot}^{(F)}(t)$ coincides with $\hat{R}_x^{(cat)}(\pi/2)$ at $t = (4n_1 \mp 1)\pi/(2\Delta_1)$. 

While these optimal DLS conditions are demonstrated only for $F = 5/2$, it is also applicable to other spin-$F$ systems, such as the $\sSz$ ground state $F = 9/2$ of ${}^{87}\text{Sr}$.

%%%%%%%%%%%%%%%%%%%%%%%%%%%%%%%%%%%%%%%%%%%%%%
% Numerical determination of Optimal Laser Detuning
%%%%%%%%%%%%%%%%%%%%%%%%%%%%%%%%%%%%%%%%%%%%%%
\subsection{Numerical determination of Optimal Laser Detuning}
\label{subsec:detuningsimulation}
Here, we numerically investigate the optimal laser detunings for the above two types of gates under a more realistic setting that includes the effects of photon scattering and deviation from the optimal DLS ratio. 

We calculate the gate infidelities for various laser detunings on the $\sSz$-$\tPo$ transition of $\fermidit$ atoms. 
The gate infidelities are evaluated by simulating the spin state dynamics using the quantum master equation that incorporates the effects of photon scattering (Eq.~(\ref{eq:lindblad})). 
The initial state is set to $\ket{0}_{sc}$ for this simulation. 
The gate fidelity used here is defined as~\cite{steck}:
\begin{equation}
    \mathcal{F}(\hat{\rho}_1, \hat{\rho}_2) := \left| \Tr\sqrt{\sqrt{\hat{\rho}_1}\hat{\rho}_2\sqrt{\hat{\rho}_1}}\right|^2.
    \label{eq:fidelity_definition}
\end{equation}

\begin{figure*}[t]
    \centering
    \includegraphics[width=\textwidth]{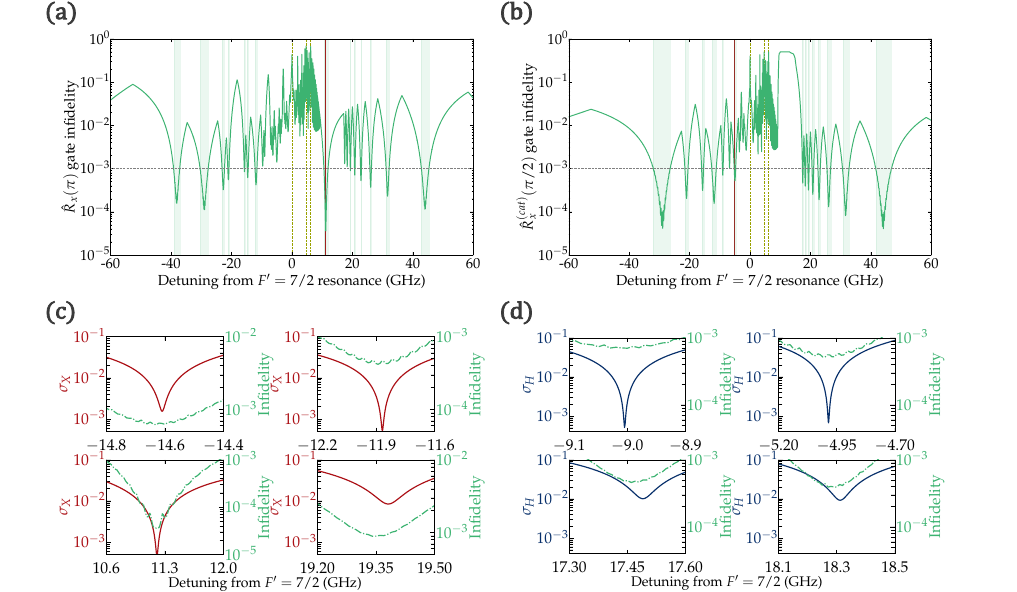}
    \caption{\textbf{Optimal laser detuning for $\hat{R}_x(\pi)$ and $\hat{R}_x^{(cat)}(\pi/2)$ gates on the $\sSz $-$ \tPo$ transition in $\fermidit$.} 
    Dependences of the $\hat{R}_x(\pi)$ gate infidelity (\textbf{a}) and the $\hat{R}_x^{(cat)}(\pi/2)$ gate infidelity (\textbf{b}) on the control laser detuning are plotted as green curves. The red, vertical, solid line for both plots indicates the optimal laser detuning of $+11.217\,\text{GHz}$ and $-5.005\,\text{GHz}$ used in the experiment for the $\hat{R}_x(\pi)$ and $\hat{R}_x^{(cat)}(\pi/2)$ gates, respectively. The infidelity plotted in the figure is extracted from the spin state dynamics calculated using the quantum master equation. The yellow, dashed, vertical lines indicate the resonance frequencies between the $\sSz$ manifold and $\tPo$ manifold for $F'=7/2$, $5/2$, and $3/2$ (from left to right). The frequency ranges where the infidelity is below $10^{-3}$ (indicated by the gray dashed line) are shaded in green.
    \textbf{(c)} Dependences of the $\hat{R}_x(\pi)$ gate infidelity (green dashdot line) and the standard deviation $\sigma_X$ of the DLS ratio (red solid line) on the laser detuning from the $\tPo,F'=7/2$ resonance. 
    \textbf{(b)} Dependences of the $\hat{R}_x^{(cat)}(\pi/2)$ gate infidelity (green dashdot line) and the standard deviation $\sigma_H$ of the DLS ratio (blue solid line) on the laser detuning from the $\tPo,F'=7/2$ resonance. 
    }
    \label{EDfig:optimaldetuning}
\end{figure*}
%----------------1S0 - 3P1, Rx(pi) freq--------------------
\renewcommand{\arraystretch}{1.25}
\begin{table*}[h]
    \centering
    \caption{
    \textbf{Optimal laser detuning for $\hat{R}_x(\pi)$ gate on the $\sSz $-$ \tPo$ transition of $\fermidit$.}\quad Table lists the frequency detuning from the $\sSz$-$\tPo~(F'=7/2)$ resonant frequency that yields the minimum infidelity together with the corresponding infidelity. The frequency detuning region where the gate infidelity is $\le 10^{-3}$, as illustrated in Extended Data Fig.~\ref{EDfig:optimaldetuning}(a), is also listed. The $\hat{R}_x(\pi)$ gate time is specified for a laser power of $100\,\text{mW}$ and a beam waist of $100\,\si{\micro m}$. Furthermore, the ratio of the differential light shifts $\Delta_k$ (for $k=1, \dots, 2F$) induced by the control laser satisfies Eq.~(\ref{eq:Xgate_condition}).
    }
    \begin{tabular}{cccccccc}
        \toprule
         & Index & Detuning (GHz) & Range (GHz) & Infidelity & Gate time (ms) & $\Delta_1:\Delta_2:\Delta_3:\Delta_4:\Delta_5$ &  \\
         \hline
          & 1  & -38.09  & -39.16 $\sim$ -37.14   & $1.57\times10^{-4}$ & 23.43   & 23:27:31:35:39 & \\
          & 2  & -28.91  & -30.08 $\sim$ -27.70   & $1.12\times10^{-4}$ & 6.80    & 11:13:15:17:19 & \\
          & 3  & -22.73  & -23.10 $\sim$ -22.31   & $3.23\times10^{-4}$ & 8.50    & 21:25:29:33:37 & \\
          & 4  & -21.03  & -21.21 $\sim$ -20.83   & $5.49\times10^{-4}$ & 10.97   & 31:37:43:49:55 & \\
          & 5  & -15.696 & -15.770 $\sim$ -15.60  & $8.35\times10^{-4}$ & 6.25    & 29:35:41:47:53 & \\
          & 6  & -14.632 & -14.792 $\sim$ -14.466 & $6.16\times10^{-4}$ & 3.65    & 19:23:27:31:35 & \\
          & 7  & -11.890 & -12.197 $\sim$ -11.572 & $4.19\times10^{-4}$ & 1.24    & 9:11:13:15:17 & \\
          & 8  & 11.217  & 10.595 $\sim$ 11.951   & $3.59\times10^{-5}$ & 0.02697 & 1:1:1:1:1 & \\
          & 9  & 19.354  & 19.318 $\sim$ 19.389   & $8.84\times10^{-4}$ & 3.66    & 27:29:31:33:35 & \\
          & 10 & 20.8    & 20.71 $\sim$ 20.88     & $7.04\times10^{-4}$ & 4.07    & 25:27:29:31:33 & \\
          & 11 & 22.85   & 22.72 $\sim$ 23.00     & $5.26\times10^{-4}$ & 4.74    & 23:25:27:29:31 & \\
          & 12 & 26.05   & 25.81 $\sim$ 26.30     & $3.67\times10^{-4}$ & 5.94    & 21:23:25:27:29 & \\
          & 13 & 31.68   & 31.18 $\sim$ 32.16     & $2.28\times10^{-4}$ & 8.51    & 19:21:23:25:27 & \\
          & 14 & 43.99   & 42.84 $\sim$ 45.34     & $1.15\times10^{-4}$ & 15.92   & 17:19:21:23:25 & \\
         \hline
         \label{EDtab:3P1_173_Rxpi}
    \end{tabular}
\end{table*}

%----------------1S0 - 3P1, Cat freq--------------------
\renewcommand{\arraystretch}{1.25}
\begin{table*}[h]
    \centering
    \caption{
    \textbf{Optimal laser detuning for $\hat{R}_x^{(cat)}(\pi/2)$ gate on the $\sSz $-$ \tPo$ transition in $\fermidit$.}\quad Table lists the frequency detuning from the $\sSz$-$\tPo~(F'=7/2)$ resonant frequency that yields the minimum infidelity together with the corresponding infidelity. The frequency detuning region where the gate infidelity is $\le 10^{-3}$, as illustrated in Extended Data Fig.~\ref{EDfig:optimaldetuning}(b), is also listed. The $\hat{R}_x^{(cat)}(\pi/2)$ gate time is specified for a laser power of $100\,\text{mW}$ and a beam waist of $100\,\si{\micro m}$. Furthermore, the ratio of the differential light shifts $\Delta_k$ (for $k=1, \dots, 2F$) induced by the control laser satisfies Eq.~(\ref{eq:Hgate_condition}).
    }
    \begin{tabular}{cccccccc}
        \toprule
         & Index & Detuning (GHz) & Range (GHz) & Infidelity & Gate time (ms) & $\Delta_1:\Delta_2:\Delta_3:\Delta_4:\Delta_5$ &  \\
         \hline
         & 1  & -28.91  & -32.01  $\sim$ -26.42  & $4.14\times10^{-5}$ & 3.40  & 11:13:15:17:19 & \\
         & 2  & -21.00  & -21.53  $\sim$ -20.51  & $2.03\times10^{-4}$ & 5.47  & 31:37:43:49:55 & \\
         & 3  & -15.696 & -16.018 $\sim$ -15.376 & $3.09\times10^{-4}$ & 3.13  & 29:35:41:47:53 & \\
         & 4  & -11.891 & -12.652 $\sim$ -11.184 & $1.55\times10^{-4}$ & 0.618 & 9:11:13:15:17 & \\
         & 5  & -9.004  & -9.11   $\sim$ -8.903  & $6.92\times10^{-4}$ & 1.11  & 25:31:37:43:49 & \\
         & 6  & -5.005  & -5.227  $\sim$ -4.758  & $5.15\times10^{-4}$ & 0.134 & 7:9:11:13:15 & \\
         & 7  & 17.463  & 17.351  $\sim$ 17.586  & $4.57\times10^{-4}$ & 1.60  & 31:33:35:37:39 & \\
         & 8  & 18.295  & 18.137  $\sim$ 18.440  & $3.87\times10^{-4}$ & 1.70  & 29:31:33:35:37 & \\
         & 9  & 19.354  & 19.156  $\sim$ 19.549  & $3.15\times10^{-4}$ & 1.83  & 27:29:31:33:35 & \\
         & 10 & 20.78   & 20.54.  $\sim$ 21.06   & $2.53\times10^{-4}$ & 2.03  & 25:27:29:31:33 & \\
         & 11 & 22.88   & 22.49   $\sim$ 23.25   & $1.90\times10^{-4}$ & 2.38  & 23:25:27:29:31 & \\
         & 12 & 26.05   & 25.48   $\sim$ 26.66   & $1.30\times10^{-4}$ & 2.97  & 21:23:25:27:29 & \\
         & 13 & 31.61   & 30.63   $\sim$ 32.78   & $8.17\times10^{-5}$ & 4.23  & 19:21:23:25:27 & \\
         & 14 & 44.14   & 41.62   $\sim$ 46.80   & $4.05\times10^{-4}$ & 8.02  & 17:19:21:23:25 & \\
         \hline
         \label{EDtab:3P1_173_Cat}
    \end{tabular}
\end{table*}

As shown in Fig.~\ref{EDfig:optimaldetuning}(a,b), and Tables~\ref{EDtab:3P1_173_Rxpi} and \ref{EDtab:3P1_173_Cat}, we find that there are several detunings where the gate infidelity is significantly suppressed, and confirm that these all satisfy the conditions stated in Eqs.~\ref{eq:Xgate_condition} and \ref{eq:Hgate_condition}. 
These findings strongly support the validity of the derived optimal DLS conditions for realizing high-fidelity single-qubit gates in large-spin systems using single-beam Raman transitions.

%%%%%%%%%%%%%%%%%%%%%%%%%%%%%%%%%%%%%%%%%%%%%%
% Robustness
%%%%%%%%%%%%%%%%%%%%%%%%%%%%%%%%%%%%%%%%%%%%%%
\subsection{Robustness against deviation from optimal differential lightshift ratios}
\label{subsec:robustness}
While the analytical formula demand the DLS ratios to be intergers, realizing a strictly intergral DLS ratios is challenging in practical experiments. To investigate the robustness of the gate fidelity against the deviation from the optimal DLS ratios, we compare the gate infidelities and the deviation around the optimal detuning. Here the deviation is characterized by the standard deviation of the five DLS ratios as follows:
\begin{equation}
    \sigma_{X} = \sqrt{\frac{1}{2F}\sum_{k=2}^{2F}\left(\frac{\Delta_k}{\Delta_1/(2n_1+1)} - (2n_k+1)\right)^2},
\end{equation}
for the $\hat{R}_x(\pi)$ gate, and
\begin{equation}
    \sigma_{H} = \sqrt{\frac{1}{2F}\sum_{k=2}^{2F}\left(\frac{\Delta_k}{\Delta_1/(4n_1\pm1)} - (4n_k\mp(-1)^k)\right)^2},
\end{equation}
for the $\hat{R}_x^{(cat)}(\pi/2)$ gate. 

As shown in Fig.~\ref{EDfig:optimaldetuning}(c,d), we confirm that the standard deviations $\sigma_X$ and $\sigma_H$ are approximately less than $0.01$ around the optimal detunings where the gate infidelities are below $10^{-3}$. While the optimal detunings for minimizing the DLS ratio deviation and the gate infidelities do not perfectly coincide, we can achive high-fidelity gates as long as the laser detuning is within a few hundred MHz of the optimal detunings. This robustness against the DLS ratio deviation relaxes the experimental requirements for implementing high-fidelity single-qubit gates using single-beam Raman transitions in large-spin systems. 

While the DLS ratio deviation characterizes the infidelity of single-qubit gates, photon scattering also significantly contributes to the infidelity. For instance, although the DLS ratio deviations $\sigma_X$ are comparable at detunings of -11.9~GHz and 11.2~GHz, the gate infidelity at -11.9~GHz is approximately one order of magnitude larger than at 11.2~GHz (Fig.~\ref{EDfig:optimaldetuning}(c)). This discrepancy arises from the difference in the number of photon scattering events during the gate.
To quantify the effect of photon scattering, we define the scattering strength as $\Gamma\Omega_L^{2}/(\Delta_L + \Gamma/2)^2$. By multiplying this value by the gate duration, we can estimate the total number of photon scattering events during the operation. For a laser power of 100~mW and a beam waist of 100\,$\si{\micro m}$ with circular polarization, the estimated number of scattering events at -11.9~GHz is $0.052\,\si{kHz} \times 1.24\,\si{ms} = 6.4 \times 10^{-2}$. In contrast, at 11.2~GHz, it is $0.058\,\si{kHz} \times 0.02697\,\si{ms} = 1.6 \times 10^{-3}$. The difference in these values explains the difference in the gate infidelities at these two detunings.

%%%%%%%%%%%%%%%%%%%%%%%%%%%%%%%%%%%%%%%%%%%%%%
% Rank-preserving condition
%%%%%%%%%%%%%%%%%%%%%%%%%%%%%%%%%%%%%%%%%%%%%%
\section{Rank-preserving condition for single-qubit gates}
The implementation of fault-tolerant quantum gates is essential for the realization of large-scale quantum computation. In general, a procedure is defined as {\it fault-tolerant} if a single component failure within the procedure results in at most one error in each encoded block~\cite{NielsenChuang2000}. These components in the procedure include noisy state preparation, noisy gate operations, noisy measurements. While the locality of errors on a specific physical qubit ensures that errors do not propagate to other qubits during single-qubit gates, in spin-cat encoding, certain ideal single-qubit gates may increase the number of hopping errors. This can transform correctable errors into uncorrectable ones, making fault-tolerance difficult to achieve. 
Therefore, it is crucial to design single-qubit gates that do not propagate such hopping errors. In the following, we discuss the {\it rank-preserving condition}, which is a criterion for gates that do not propagate correctable hopping errors into uncorrectable errors. Here, we specifically focus on the rank-preserving nature of single-qubit gates.

We begin by characterizing the correctable errors for the spin-cat code concatenated with a repetition code. For this concatenated code, the set of correctable errors $\mathcal{E}_K$ is given by~\cite{Omanakuttan2024spin}:
\begin{equation}
    \mathcal{E}_K = \textrm{span}\left\{ \hat{S}_q^{(k)}, \hat{A}_q^{(k)} | 0\leq k \leq K, \ -k\leq q \leq k \right\},
    \label{eq:correctable_all}
\end{equation}
where $K = \lfloor \frac{2F-1}{2} \rfloor $ is the maximum number of correctable hopping errors. The operators $\hat{S}_q^{(k)}$ and $\hat{A}_q^{(k)}$ are defined as the following linear combinations of spherical tensor operators:
 \begin{equation}
        \label{eq:OP_operator}
        \begin{aligned}
            & \hat{S}_q^{(k)} = \frac{1}{\sqrt{2}}\qty[\hat{T}_q^{(k)} + (-1)^k\hat{T}_{-q}^{(k)}], \\
            & \hat{A}_q^{(k)} = \frac{1}{\sqrt{2}}\qty[\hat{T}_q^{(k)} - (-1)^k\hat{T}_{-q}^{(k)}], \\
            & \hat{S}_0^{(k)} = \hat{T}_0^{(k)},
        \end{aligned}
\end{equation}
where $0\leq k\leq2F$ and $q=1,2,\dots,k$. 
When applied to the spin-cat states $\ket{\pm}_{F}$, these operators induce hopping and phase-flip errors. Specifically, $\hat{S}_q^{(k)}$ with $q > 0$ induces hopping errors between the spin-cat and kitten states, $\hat{S}_0^{(k)}$ induces phase-flip errors within these states, and $\hat{A}_q^{(k)}$ induces both hopping and phase-flip errors.

The following provides a criterion for gate operations that do not propagate correctable hopping errors into uncorrectable ones.
Given the set of correctable errors $\mathcal{E}_K$ shown in Eq.~\ref{eq:correctable_all}, a gate operation $\hat{U}$ is required to satisfy the following rank-preserving condition:
    \bal
    \hat{U} \hat{E} \hat{U}^\dagger \subset \mathcal{E}_K
    \label{eq:rank preserving def}
    \eal
for every operator $\hat{E}\in\mathcal{E}_K$. 
As shown in Ref.~\cite{Omanakuttan2024spin}, the spin-$F$ SU(2) rotations (i.e. covariant SU(2) rotations) satisfy this rank-preserving condition. The spin-$F$ SU(2) rotational matrix $D_{q'q}^{(k)}(\alpha,\beta,\gamma)$ is given by:
\begin{align}
    &D_{q'q}^{(k)}(\alpha,\beta,\gamma) \notag \\
    &:= \bra{k,q'}\exp\qty(-i\frac{\alpha}{\hbar}\hat{J}_z)\exp\qty(-i\frac{\beta}{\hbar}\hat{J}_y)\exp\qty(-i\frac{\gamma}{\hbar}\hat{J}_z)\ket{k,q} \notag \\
    &= e^{-i(q'\alpha + q\gamma)}\bra{k,q'}\exp\qty(-i\frac{\beta}{\hbar}\hat{J}_y)\ket{k,q}
    \label{eq:wigner_D}
\end{align}
where $\alpha$, $\beta$, and $\gamma$ are the Euler angles, and $\hat{J}_y$ and $\hat{J}_z$ are the spin-$F$ angular momentum operators. This rotational matrix is known as the Wigner D-matrix. When omitting the phase factors, this matrix is same as the wigner small $d$-matrix given as Eq.~(\ref{eq:wigner_small_d_explicit}). 

The proof of the rank-preserving of the spin-$F$ SU(2) rotation is directly derived from the definition of spherical tensor operators~\cite{steck}, 
\begin{equation}
    \hat{D}(\vb*{\zeta}) \hat{T}^{(k)}_q \hat{D}(\vb*{\zeta})^{\dagger} = \sum_{q'=-k}^{k}\hat{T}^{(k)}_{q'}D_{q'q}^{(k)}(\vb*{\zeta}),
    \label{eq:def_sphere}
\end{equation}
where an operator of rank $k$ is transformed into a linear combination of spherical tensor operators of the same rank $k$ under spin-$F$ SU(2) rotations. Consequently, the Pauli gates $\hat{X}$, $\hat{Y}$, and $\hat{Z}$ can be implemented in a rank-preserving manner using spin-$F$ SU(2) rotations:
\begin{equation}
    \hat{X} = \hat{D}(\pi,\pi,0), \quad
    \hat{Y} = \hat{D}(0,\pi,0), \quad
    \hat{Z} = \hat{D}(\pi,0,0).
\end{equation}

While spin-$F$ SU(2) rotations satisfy the rank-preserving condition, it remains unclear whether single-qubit gates that cannot be represented solely by SU(2) rotations can satisfy this criterion. An example of such an operation is the Hadamard gate, as noted in Ref.~\cite{Omanakuttan2024spin}. For completeness, we provide a proof below that fully identifies the set of logical single-qubit gates implementable by a spin-$F$ SU(2) rotation for $F>1/2$. The Hadamard gate is shown to be outside this set.

\begin{pro}
        Suppose $F>1/2$. Then, the set of logical single-qubit gates implemented by rotational operators $\hat{D}(\alpha,\beta,\gamma)$ is generated by $\hat{X}$ and $\hat{R}_z(\theta)$ for an arbitrary $\theta\in[0,2\pi)$. 
\end{pro}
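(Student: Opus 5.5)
The plan is to first determine exactly which rotations $\hat{D}(\alpha,\beta,\gamma)$ implement a logical gate, and then compute their restriction to the code space. We call $\hat{D}$ a logical gate if it maps the code space $\mathcal{C}=\mathrm{span}\{\ket{0}_{sc},\ket{1}_{sc}\}=\mathrm{span}\{\ket{F,-F},\ket{F,F}\}$ into itself; its logical action is then its restriction to $\mathcal{C}$, taken modulo a global phase. By Eq.~(\ref{eq:wigner_D}), the phase factors $e^{-i(q'\alpha+q\gamma)}$ never change which matrix elements vanish. The leakage condition therefore reduces to a condition on the Wigner small $d$-matrix alone: we need $d^{(F)}_{m',\pm F}(\beta)=0$ for every $|m'|<F$.

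\textbf{Step 1: the stretched-state columns.} For $m=F$ only one term survives in the sum in Eq.~(\ref{eq:wigner_small_d_explicit}), because the factor $(k-m)!$ forces $j=0$. This gives
\begin{equation}
d^{(F)}_{m',F}(\beta)=\pm\sqrt{\binom{2F}{F+m'}}\left(\cos\tfrac{\beta}{2}\right)^{F+m'}\left(\sin\tfrac{\beta}{2}\right)^{F-m'},
\end{equation}
and $d^{(F)}_{m',-F}$ has the analogous form with the roles of cosine and sine exchanged. The hypothesis $F>1/2$ enters here: it guarantees that an intermediate level exists, for instance $m'=F-1$ with $F-1>-F$. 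For that level the element is proportional to $\cos^{2F-1}(\beta/2)\sin(\beta/2)$, with both exponents strictly positive. It vanishes only if $\sin(\beta/2)=0$ or $\cos(\beta/2)=0$, that is, only for $\beta\in\{0,\pi\}$ modulo $2\pi$. Conversely, for these two values every intermediate element vanishes, since each has both exponents strictly positive. So the logical rotations are exactly those with $\beta\in\{0,\pi\}$.

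\textbf{Step 2: the logical action in each case.} For $\beta=0$, $\hat{D}$ acts on $\mathcal{C}$ as $\mathrm{diag}(e^{iF(\alpha+\gamma)},e^{-iF(\alpha+\gamma)})$. Up to a global phase this is $\hat{R}_z(\theta)$ with $\theta=-2F(\alpha+\gamma)$. Because $\alpha+\gamma$ is a free real parameter, every $\theta\in[0,2\pi)$ is attained. For $\beta=\pi$, the $d$-matrix is purely anti-diagonal, $d^{(F)}_{m',m}(\pi)=(-1)^{F-m}\delta_{m',-m}$. The restriction to $\mathcal{C}$ is therefore an anti-diagonal unitary with phases, which we can write as $\hat{X}\hat{R}_z(\theta')$ up to a global phase.

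\textbf{Step 3: conclusion.} Every logical rotation thus lies in the group generated by $\hat{X}$ and $\{\hat{R}_z(\theta)\}$. Conversely, $\hat{R}_z(\theta)$ and $\hat{X}=\hat{D}(\pi,\pi,0)$ are themselves realized by rotations, and compositions of rotations are rotations. Hence the two sets coincide. As a corollary, the Hadamard gate lies outside this set: it maps $\ket{0}$ to a state with nonzero overlap on both $\ket{0}$ and $\ket{1}$, whereas every element of the group is monomial (diagonal or anti-diagonal) in the logical basis.

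The main obstacle is Step 1, and specifically getting the single-term reduction of Eq.~(\ref{eq:wigner_small_d_explicit}) right, with its factorial constraints and sign conventions. This reduction is what makes the vanishing condition transparent and explains why $F=1/2$, which has no intermediate level, must be excluded.
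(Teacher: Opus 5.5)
Your proposal is correct and follows the paper's proof essentially step for step: leakage into an intermediate level forces $\beta\in\{0,\pi\}$ (you use the $m=F$ column and the level $m'=F-1$, while the paper uses the $m=-F$ column), and the stretched-state matrix elements then give $\hat{R}_z(\theta)$ and $\hat{R}_z(\theta)\hat{X}$; your explicit converse inclusion only spells out what the paper leaves implicit. One minor slip does not affect the result: for $m=F$ the single surviving term in Eq.~(\ref{eq:wigner_small_d_explicit}) is $j=F-m'$, forced by the denominator factorials $(k-j-m')!$ and $(j-m+m')!$, not $j=0$ via $(k-m)!$, which sits in the numerator (it is in the $m=-F$ column that $(k+m-j)!$ forces $j=0$), and your displayed formula for $d^{(F)}_{m',F}(\beta)$ is nonetheless correct.
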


\begin{proof}
We first show that for $\hat{D}(\alpha,\beta,\gamma)$ to be a logical gate, it must hold either $\beta=0$ or $\beta=\pi$. To see this, let $\ket{\psi}$ be the state obtained by applying the rotation operator $\hat{D}(\alpha,\beta,\gamma)$ to an initial state prepared in $\ket{0}=\ket{F,m_F=-F}$, which is given by:
        \bal
            \ket{\psi} &= \sum_{m'_F}e^{-im'_F\alpha + iF\gamma}(-1)^{F+m'_F}\sqrt{\frac{(2F)!}{(F+m'_F)!(F-m'_F)!}}\\
            &\quad \times\qty(\cos\frac{\beta}{2})^{F-m'_F}\qty(\sin\frac{\beta}{2})^{F+m'_F}\ket{F,m'_F}.
        \eal
For this to be a logical state of the spin-cat code, all terms with $m_F'\neq \pm F$ must vanish. 
One can directly see that, if $\beta\neq 0$ or $\beta\neq \pi$, there exists $m_F'\neq \pm F$ (because $F>1/2$) such that the amplitude for such $m_F'$ is non-zero. 
Therefore, it must be the case that $\beta=0$ or $\beta=\pi$ for $\hat{D}(\alpha,\beta,\gamma)$ to be a logical gate. 

It now suffices to analyze the matrix elements $\bra{F,\pm F}\hat{D}(\alpha,\beta,\gamma)\ket{F,\pm F}$ and $\bra{F,\pm F}\hat{D}(\alpha,\beta,\gamma)\ket{F,\mp F}$ for $\beta=0$ and $\beta=\pi$.
Recall that: 
\bal
 \hat{D}(\alpha,\beta,\gamma)\ket{k,m} = \sum_{m'}D_{m'm}^{(k)}(\alpha,\beta,\gamma)\ket{k,m'}
\eal
where $D_{q'q}^{(k)}(\vb*{\zeta})$ for the Euler angles $(\alpha, \beta, \gamma)$ is the Wigner D-matrix defined in Eq.~(\ref{eq:wigner_D}).
When the $\hat{J}_z$ terms of the Wigner D-matrix are omitted, the matrix can also be expressed as Wigner small $d$-matrix $\hat{d}^{(k)}$ defined in Eq.~(\ref{eq:wigner_small_d_explicit}).

With these quantities, the matrix elements of interest can be written as 
\bal
\bra{F,\pm F}\hat{D}(\alpha,\beta,\gamma)\ket{F,\pm F} &= e^{\mp iF(\alpha+\gamma)}d_{\pm F,\pm F}^{(F)}(\beta), \\
\bra{F,\pm F}\hat{D}(\alpha,\beta,\gamma)\ket{F,\mp F} &= e^{\mp iF(\alpha-\gamma)}d_{\pm F, \mp F}^{(F)}(\beta).
\label{eq:rotation operator matrix elements}
\eal
The Wigner small $d$-matrix elements can be explicitly evaluated as 
\bal
 d^{(F)}_{FF}(\beta=0) &= d^{(F)}_{-F,-F}(\beta=0)=1\\
 d^{(F)}_{F,-F}(\beta=0) &= d^{(F)}_{-F,F}(\beta=0)=0\\
 d^{(F)}_{FF}(\beta=\pi) &= d^{(F)}_{-F,-F}(\beta=\pi)=0\\
 d^{(F)}_{F,-F}(\beta=\pi) &= (-1)^{2F} \\
 d^{(F)}_{-F,F}(\beta=\pi)&=1.
\eal
This, together with \eqref{eq:rotation operator matrix elements}, particularly means that when $\beta=0$, the rotation operator becomes a $z$-axis rotational gate $\hat{R}_z(\theta)$ with some angle $\theta$, and an arbitrary $\theta$ can be realized by choosing appropriate $\alpha$ and $\gamma$.
On the other hand, when $\beta=\pi$, the rotation operator becomes a bit flip followed by an additional phase gate, which is written as $\hat{R}_z(\theta)\hat{X}$. In this case too, an arbitrary $\theta$ can be realized by choosing $\alpha$ and $\gamma$ appropriately. 
This means that the set of implementable gates are the ones that can be realized by combining $\hat{R}_z(\theta)$ and $\hat{R}_z(\theta)\hat{X}$, which coincides with the set generated by $\hat{R}_z(\theta)$ and $\hat{X}$.
\end{proof}

Specifically, this indicates that gates capable of creating superpositions, such as the Hadamard gate, cannot be implemented by a covariant SU(2) rotation for $F>1/2$. This conclusion is further supported by the fidelity $\mathcal{F}=\qty|\braket{+}{\psi}|^2$ between the spin-cat state $\ket{+}$ and the state $\ket{\psi}$ obtained by applying the rotation operator $\hat{D}(\alpha,\beta,\gamma)$ to the initial state $\ket{0}=\ket{F,m_F=-F}$:
\begin{equation}
    \mathcal{F} =  \frac12\qty|\qty(\cos\frac{\beta}{2})^{2F} + e^{-2iF\alpha}(-1)^{2F}\qty(\sin\frac{\beta}{2})^{2F}|^2.
\end{equation}
While there exist angles $\alpha$ and $\beta$ in the range $0 \leq \alpha \leq 2\pi$ and $0 \leq \beta \leq 2\pi$ for which $\mathcal{F}=1$ when $F=1/2$, the Hadamard gate cannot be represented by a single rotation operator in systems with dimensions higher than two ($F>1/2$) (Fig.~\ref{fig:hadamard}).

\begin{figure}[t]
    \centering
    \includegraphics[keepaspectratio,width=\linewidth]{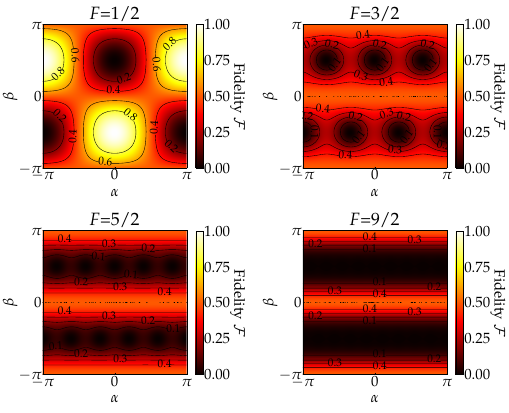}
    \caption[Cat Fidelity solely with SU(2) Operator]{
        \label{fig:hadamard}
        \textbf{Cat Fidelity solely with SU(2) Operator.}\quad Fidelity between the cat state $\ket{+}$ and the state resulting from applying the $(2F+1)$-dimensional rotation operator $\hat{D}(\alpha,\beta,\gamma)$ to the initial state $\ket{0}=\ket{F,m_F=-F}$. The simulation results show that the Hadamard gate can be represented by a single rotation operator only when $F=1/2$.
        }
\end{figure}

Although the Hadamard gate cannot be implemented
solely by an SU(2) rotation, there is still a possibility that it can be implemented in a rank-preserving manner by another technique, e.g., utilizing measurements.
We leave further investigation to future work.

\section{Details of Noise-Bias Dihedral Randomized Benchmarking}
\label{method:DRB}
While CRB is effective for characterizing Clifford gate errors, it cannot be directly applied to non-Clifford operations, such as the $\hat{T}$ gate. Furthermore, to characterize the bias properties of noise, the Hadamard gate must be avoided because it maps $\hat{Z}$ errors to $\hat{X}$ errors, thereby mixing different error types. To address this, dihedral randomized benchmarking (DRB) has been developed to estimate the fidelity of non-Clifford gates~\cite{Dugas2015, cross2016scalable} and to characterize noise-bias structures~\cite{Claes2023, Qing2024FTQC} by utilizing the $\mathbb{D}_8$ dihedral group. The single-qubit $\mathbb{D}_8$ dihedral group is generated by the $\hat{T}$ and $\hat{X}$ gates. The DRB protocol follows a procedure similar to CRB, with the key difference being the use of random sequences from the $\mathbb{D}_8$ group instead of the Clifford group. For noise-bias characterization, the DRB protocol involves two separate experiments: one with preparation and measurement in the $z$-basis and the other in the $x$-basis. The steps for the single-qubit noise-bias DRB protocol are as follows~\cite{Claes2023}:
\begin{enumerate}
    \item Prepare the qubit in an eigenstate of the $\hat{Z}$ (or $\hat{X}$) operator, typically $\ket{0}$ (or $\ket{+}$).
    \item Apply a unitary operation $\hat{P}$ randomly sampled from the single-qubit Pauli group. 
    \item Apply a sequence of $m$ gates randomly sampled from the $\mathbb{D}_8$ dihedral group. 
    \item Apply an inverse unitary that returns the qubit to the initial state.
    \item Measure the final state in the $z$- (or $x$-) basis and record 1 if the measurement outcome matches the initial state, and 0 otherwise.
    \item Repeat the steps above for various sequence lengths $m+2$.
    \item Average the outcomes to determine the survival probability for each length $m+2$.
\end{enumerate}
The two DRB circuits for the $z$-basis and the $x$-basis measurements are illustrated in Fig.~\ref{fig:bias}(a). Note that the original noise-bias DRB protocol~\cite{Claes2023} includes only the inverse gate for the length-$m$ random sequence presented in Step 3, while post-processing the measurement results by a factor of ±1 depending on the initially applied Pauli gate $\hat{P}$. To simplify the experimental implementation, we include the inverse of the entire sequence, up to $\hat{P}$. 

The average return probabilities $P_1(m)$ for $z$-basis DRB and $P_2(m)$ for $x$-basis DRB follow exponential decay models:
\begin{equation}
    P_1(m) = A_1 \lambda_1^{m+2} + B_1, \quad P_2(m) = A_2 \lambda_2^{m+4} + B_2,
\end{equation}
where coefficients $A_j$ and $B_j$ absorb state preparation and measurement errors. In this work, we fix $B_1$ and $B_2$ to the population values measured in Fig.~\ref{EDfig:SSR}(e) to reduce the fitting uncertainty.
The decay parameters $\lambda_j$ are related to the average dephasing error probability $p_{D}$ and the average non-dephasing error probability $p_{ND}$ by:
\begin{align}
    p_{D} &= \frac{2^N - 1}{4^N}\left\{1 + (2^N-1)\lambda_1 - 2^N\lambda_2 \right\}, \\
    p_{ND} &= \frac{2^N - 1}{2^N}\left(1 - \lambda_1\right),
\end{align}
where $2^N$ is the Hilbert space dimension ($N=1$ for a single qubit). The bias parameter $\eta$ is defined as the ratio: 
\begin{equation}
    \eta = \frac{p_{D}}{p_{ND}}.
\end{equation}

To implement the DRB protocol, we identify the complete set of gates comprising the $\mathbb{D}_8$ dihedral group. This group consists of 22 distinct elements generated by $\hat{T}$ and $\hat{X}$ gates. A comprehensive list of these 22 gates, including their decompositions into $\hat{R}_z(\pi/4)$ and $\hat{R}_x(\pi)$ rotations, is provided in Table~\ref{tab:dihedral_gate_list}. In our DRB implementation, the average number of pulses per dihedral gate is 1.0(3).

\begin{table*}[h]
    \centering
    \caption{
        \textbf{Single-qubit $\mathbb{D}_8$ dihedral gates and their decompositions.} 
        Single-qubit $\mathbb{D}_8$ dihedral gates are presented alongside their corresponding matrix representations and decompositions into $\hat{R}_z(\pi/4)$ and $\hat{R}_x(\pi)$ rotations. Global phase factors are omitted. For simplicity, $\hat{R}_z(\theta)$ and $\hat{R}_x(\theta)$ are denoted as $Z(\theta)$ and $X(\theta)$, respectively.
        }
    \setlength{\tabcolsep}{7pt}
    \begin{tabular}{cccc|cccc}
        \toprule
        Index & Label & Gate matrix & Pulse decomposition & Index & Label & Gate matrix & Pulse decomposition \\
        \midrule
        1  & $\hat{I}$ & $\mqty(1 & 0 \\ 0 & 1)$ & - & 11 & $\hat{X}\hat{T}$ & $\mqty(0 & e^{i\pi/4} \\ 1 & 0)$ & $X(\pi)Z(\frac{\pi}{4})$ \\
        2  & $\hat{X}$ & $\mqty(0 & 1 \\ 1 & 0)$ & $X(\pi)$ & 12 & $\hat{X}\hat{S}$ & $\mqty(0 & i \\ 1 & 0)$ & $X(\pi)Z(\frac{\pi}{2})$ \\
        3  & $\hat{Y}$ & $\mqty(0 & -i \\ i & 0)$ & $X(\pi)Z(\pi)$ & 13 & $\hat{X}\hat{T}\hat{S}$ & $\mqty(0 & e^{i3\pi/4} \\ 1 & 0)$ & $X(\pi)Z(\frac{3\pi}{4})$ \\
        4  & $\hat{Z}$ & $\mqty(1 & 0 \\ 0 & -1)$ & $Z(\pi)$ & 14 & $\hat{X}\hat{T}^\dagger$ & $\mqty(0 & e^{-i\pi/4} \\ 1 & 0)$ & $X(\pi)Z(\frac{7\pi}{4})$ \\
        5  & $\hat{T}$ & $\mqty(1 & 0 \\ 0 & e^{i\pi/4})$ & $Z(\frac{\pi}{4})$ & 15 & $\hat{X}\hat{S}^\dagger$ & $\mqty(0 & -i \\ 1 & 0)$ & $X(\pi)Z(\frac{3\pi}{2})$ \\
        6  & $\hat{S}$ & $\mqty(1 & 0 \\ 0 & i)$ & $Z(\frac{\pi}{2})$ & 16 & $\hat{X}\hat{T}^\dagger\hat{S}^\dagger$ & $\mqty(0 & e^{-i3\pi/4} \\ 1 & 0)$ & $X(\pi)Z(\frac{5\pi}{4})$ \\
        7  & $\hat{T}\hat{S}$ & $\mqty(1 & 0 \\ 0 & e^{i3\pi/4})$ & $Z(\frac{3\pi}{4})$ & 17 & $\hat{T}\hat{X}$ & $\mqty(0 & e^{-i\pi/4} \\ 1 & 0)$ & $Z(\frac{\pi}{4})X(\pi)$ \\
        8  & $\hat{T}^\dagger$ & $\mqty(1 & 0 \\ 0 & e^{-i\pi/4})$ & $Z(\frac{7\pi}{4})$ & 18 & $\hat{S}\hat{X}$ & $\mqty(0 & -i \\ 1 & 0)$ & $Z(\frac{\pi}{2})X(\pi)$ \\
        9  & $\hat{S}^\dagger$ & $\mqty(1 & 0 \\ 0 & -i)$ & $Z(\frac{3\pi}{2})$ & 19 & $\hat{T}\hat{S}\hat{X}$ & $\mqty(0 & e^{-i3\pi/4} \\ 1 & 0)$ & $Z(\frac{3\pi}{4})X(\pi)$ \\
        10 & $\hat{T}^\dagger\hat{S}^\dagger$ & $\mqty(1 & 0 \\ 0 & e^{-i3\pi/4})$ & $Z(\frac{5\pi}{4})$ & 20 & $\hat{T}^\dagger\hat{X}$ & $\mqty(0 & e^{i\pi/4} \\ 1 & 0)$ & $Z(\frac{7\pi}{4})X(\pi)$ \\
         & & & & 21 & $\hat{S}^\dagger\hat{X}$ & $\mqty(0 & i \\ 1 & 0)$ & $Z(\frac{3\pi}{2})X(\pi)$ \\
         & & & & 22 & $\hat{T}^\dagger\hat{S}^\dagger\hat{X}$ & $\mqty(0 & e^{3i\pi/4} \\ 1 & 0)$ & $Z(\frac{5\pi}{4})X(\pi)$ \\
        \bottomrule
    \end{tabular}
    \label{tab:dihedral_gate_list}
\end{table*}

%%%%%%%%%%%%%%%%%%%%%%%%%%%%%%%%%%%%%%%%%%%%%%
% Construction of Error Budget
%%%%%%%%%%%%%%%%%%%%%%%%%%%%%%%%%%%%%%%%%%%%%%
\section{Construction of Error Budgets}
\label{sec:error_budget_detail}
The effects of the error sources on the single-qubit gate errors are extracted from simulated randomized benchmarking (RB). In this RB, the dynamics of each gate pulse are sampled using the quantum master equation.
The single-qubit gate error sources considered in our analysis are listed below with the corresponding Clifford gate errors, the non-dephasing probability $p_{ND}$, and the dephasing probability $p_{D}$ listed in Table~\ref{tab:error_budget}.

\begin{itemize}
    \item \textit{Laser Intensity Fluctuation.} The laser intensity fluctuation is monitored by measuring the fluctuation of the pulse area. The measured standard deviation $\sigma$ of the pulse area is 0.2\% of the mean pulse area for QB1 (used for $\text{SU}(2)$ rotation) and 0.56\% for QB2 (used for cat generation pulse) over the course of the experiment of about 40 minutes.

    \item \textit{Laser Polarization Fluctuation.} The laser polarization fluctuation is monitored using a polarimeter (PAX1000, Thorlabs) by measuring the standard deviation of the azimuth $\varphi$ and ellipticity $\chi$ (Eq.~(\ref{eq:polarization})). The measured standard deviations for $\varphi$ and $\chi$ for QB1 (QB2) are $3.1\,^{\circ}$ ($0.8\,^{\circ}$) and $0.13\,^{\circ}$ ($0.02\,^{\circ}$), respectively.

    \item \textit{Laser Frequency Fluctuation.} The control laser frequency is stabilized by a wavemeter (WS8-10, HighFinesse), and the laser linewidth (standard deviation, $\sigma$) is less than $1\,\text{MHz}$ for both beams QB1 and QB2. In the simulation, the laser frequency is sampled from a Gaussian distribution with $\sigma=1\,\text{MHz}$.

    \item \textit{Orthogonality Imperfection.} The deviation of the crossing angle between QB1 and QB2 from $90\,^{\circ}$, derived from the optical path design, is $2\,^{\circ}$. For the CRB experiment, we align the magnetic field parallel to the QB1 propagation axis, and then only QB2 is deviated by $2\,^{\circ}$. For the DRB experiment, the magnetic field is applied orthogonal to both QB1 and QB2, and the simulation sets the misalignment such that both QB1 and QB2 are $2\,^{\circ}$ away from the magnetic field direction.

    \item \textit{Finite Zeeman Splitting.} The single-beam Raman transition assumes that the magnitude of the light shift induced by the control laser is sufficiently larger than the Zeeman splitting. A finite Zeeman splitting causes deviations from the ideal rotation achieved by the single-beam Raman transition. We simulate this effect on the gate fidelity via the master equation. In our experiment, a magnetic field of $1.01\,\text{mT}$ is applied for the CRB measurement, and $1.35\,\text{mT}$ for the DRB measurement. In the simulation, the master equation calculation is modified by using the Hamiltonian,
    \begin{align}
        \hat{H}_{rot} = &\hat{d}^{(F)}\qty(\frac{\pi}{2})\hat{H}_{LS}\hat{d}^{(F)\dagger}\qty(\frac{\pi}{2}) \notag \\
        &+ \mu_Bg_FB\sum_{k=0}^{2F}(-F+k)\ket{-F+k}\bra{-F+k},
    \end{align}
    where $B$ is the magnitude of the magnetic field, $\mu_B$ is the Bohr magneton and the Land\'e g-facor is calculated as $g_F=-\mu_I/(\mu_B|I|)$~\cite{kroeze2025171}. The nuclear magnetic moment  $\mu_I$ of the $\sSz$ state of $\fermidit$ is $-0.6776\,\mu_N$~\cite{Porsev2004}, where $\mu_N$ is the nuclear magneton.

    \item \textit{Dephasing.} A finite coherence time sets a upper bound on the accuracy with which any quantum gate can be executed. To simulate the upper limit of from this, we use the coherence times $T_{2,k}^*=251(21)~\text{ms}/(F-k)~(k=0,1,2)$ obtained from the measurements shown in Fig.~\ref{fig:lifetime}. The master equation is calculated by setting the collapse operator $\hat{C}$ as:
    \begin{align}
        \hat{C} = \sum_{k=0}^{\frac{2F-1}{2}}\sqrt{\frac{1}{T_{2,k}^*}}\qty(\ket{-F+k}\bra{-F+k} - \ket{F-k}\bra{F-k}).
    \end{align}

    \item \textit{Photon Scattering.} To account for the effect of photon scattering caused by the control laser light, the master equation is calculated by setting the collapse operator as Eq.~(\ref{eq:collapse_photon}).
\end{itemize}

\renewcommand{\arraystretch}{1.25}
\begin{table*}
    \centering
    \caption{List of error sources on the single-qubit gate with the corresponding Clifford gate error the source produces, the non-dephasing probability $p_{ND}$, and the dephasing probability $p_{D}$.}
    \begin{tabular}{l|c|c|c}
        \toprule
         Error source & Clifford gate error & $p_{ND}$ & $p_{D}$ \\ \hline
         Laser intensity fluctuation    & $7.6(3)\times10^{-3}$ & $5.5(1)\times 10^{-8}$ & $8.1(4)\times10^{-6}$ \\
         Laser polarization fluctuation & $5.6(3)\times10^{-2}$ & $1.8(1)\times 10^{-3}$ & $4.8(3)\times 10^{-3}$ \\
         Laser frequency fluctuation    & $7.1(3)\times10^{-5}$ & $5.1(2)\times 10^{-8}$ & $2.1(3)\times 10^{-6}$ \\
         Orthogonality imperfection     & $3.4(3)\times10^{-3}$ & $6.8(7)\times 10^{-4}$ & $2.0(2)\times 10^{-3}$ \\
         Finite Zeeman splitting        & $5.0(4)\times10^{-3}$ & $3.1(3)\times 10^{-4}$ & $1.2(1)\times 10^{-3}$ \\
         Dephasing                      & $5.8(1)\times10^{-4}$ & $3.1(0)\times 10^{-7}$ & $7.5(1)\times 10^{-4}$ \\
         Photon scattering              & $7.8(1)\times10^{-4}$ & $7.7(1)\times 10^{-8}$ & $3.4(0)\times 10^{-5}$ \\
        \hline
    \end{tabular}
    \label{tab:error_budget}
\end{table*}

Fig.~\ref{fig:budget} in the main text shows the comparison between the contribution of each error source and the experimental results. We anticipate that a gate fidelity exceeding $0.999$ can be achieved through technical upgrades, including stabilization of the laser intensity and polarization, improvement in the orthogonality alignment, and the reduction of the finite Zeeman splitting contribution by realizing faster gate operations.

%%%%%%%%%%% References %%%%%%%%%%
\bibliography{Refs}

%%%%%%%%%%% Extended Data %%%%%%%%%%
\setcounter{figure}{0}

\end{document}